\journal{}
\newtheorem{observation}{Observation}
\newtheorem{corollary}{Corollary}
\newtheorem{lemma}{Lemma}
\newtheorem{theorem}{Theorem}
\newtheorem{rerule}{Reduction Rule}
\newcommand{\np}{\textsf{NP}}
\newcommand{\nph}{{\np}-hard}
\newcommand{\nphns}{{\np}-hardness}
\newcommand{\fpt}{\textsf{FPT}}
\newcommand{\wa}{\textsf{W[1]}}
\newcommand{\wb}{\textsf{W[2]}}
\newcommand{\wbh}{\wb-hard}
\newcommand{\poly}{\textsf{P}}
\newcommand{\wbhns}{\wb-hardness}
\newcommand{\xp}{\textsf{XP}}
\newcommand{\yes}{YES}
\newcommand{\yesins}{{\yes}-instance}
\newcommand{\setmid}{\mid}
\newcommand{\abs}[1]{|#1|}
\newcommand{\bigo}[1]{O(#1)}
\newcommand{\smallo}[1]{o(#1)}
\newcommand{\prob}[1]{{\sc{#1}}}
\newcommand{\memph}[1]{\text{#1}}
\newcommand{\bigos}[1]{O^*(#1)}
\newcommand{\arc}[2]{(#1, #2)}
\newcommand{\ARC}[1]{E(#1)}
\newcommand{\lsr}{\text{LSR}}
\newcommand{\csr}{\text{CSR}}
\renewenvironment{proof}[1][{\it{Proof}}]{\noindent\textbf{#1} }{\hfill \qed\medskip}
\newcommand{\EP}[3]
{
\begin{center}
{
\small
\begin{tabular}{|ll|} \hline
\multicolumn{2}{|l|}{\textsc{#1}} \\[2mm]
{\bf Input:}    & \parbox[t]{0.85\columnwidth}{#2\vspace*{1mm}}  \\
{\bf Question:} & \parbox[t]{0.85\columnwidth}{#3\vspace*{1mm}} \\ \hline
\end{tabular}
}
\end{center}
}
\begin{document}
	
\begin{frontmatter}
		
		\title{Group Control for Procedural Rules: Parameterized Complexity and Consecutive Domains}
		
		\author{Yongjie YANG\corref{cor1}}\ead{yyongjiecs@gmail.com}
        \author{Dinko DIMITROV}\ead{dinko.dimitrov@mx.uni-saarland.de}

\address{Chair of Economic Theory, Saarland University, Saarbr\"{u}cken, 66123, Germany}

\begin{abstract}
We consider {\prob{Group Control by Adding Individuals}} (\prob{GCAI}) in the setting of group identification for two procedural rules---the consensus-start-respecting rule and the liberal-start-respecting rule.
It is known that {\prob{GCAI}} for both rules are {\nph}, but whether they are fixed-parameter tractable with respect to the number of distinguished individuals remained open. We resolve both open problems in the affirmative. In addition, we strengthen the {\nphns} of {\prob{GCAI}} by showing that, with respect to the natural parameter the number of added individuals, {\prob{GCAI}} for both rules are {\wbh}. Notably, the {\wbhns} for the liberal-start-respecting rule holds even when restricted to a very special case where the qualifications of individuals satisfy the so-called consecutive ones property. However, for the consensus-start-respecting rule, the problem becomes polynomial-time solvable in this special case. We also study a dual restriction where the disqualifications of individuals fulfill the consecutive ones property, and show that under this restriction {\prob{GCAI}} for both rules turn out to be polynomial-time solvable. Our reductions for showing {\wbhns} also imply several lower bounds concerning kernelization and exact algorithms.
\end{abstract}
		
\begin{keyword}
 group control by adding individuals\sep group identification\sep parameterized complexity\sep consecutive ones property\sep FPT\sep W[2]-hard
\end{keyword}
		
\end{frontmatter}

\section{Introduction}
In the model of group identification, we have a group of individuals each of whom holds binary valuations on all individuals including herself, and the model aims to determine who among these individuals are socially qualified by utilizing a certain social aggregation rule (cf.~\cite{KasherRwhoisj}). Since the initial works of Kasher~\cite{Kasher1993} and Kasher and Rubinstein~\cite{KasherRwhoisj}, group identification has been extensively explored from the perspective of economics, with the main focus being on axiomatic characterizations of different social aggregation rules (cf.\ \cite{DBLP:series/sfsc/Dimitrov11,DBLP:journals/mss/DimitrovSX07,DBLP:journals/mss/Nicolas07,DBLP:journals/geb/Miller08,DBLP:journals/jet/SametS03}).

As social aggregation rules can be seen as special voting systems where the individuals are both voters and candidates, inspired by the pioneering work of Bartholdi, Tovey, and Trick~\cite{Bartholdi92howhard} on voting control problems, Yang and Dimitrov~\cite{DBLP:journals/aamas/YangD18} initiated the study of group identification from a computer science perspective by investigating the complexity of the {\prob{Group Control By Adding/Deleting Individuals}} ({\prob{GCAI}}/{\prob{GCDI}}) problems. In particular, {\prob{GCAI}}/{\prob{GCDI}}  consists in determining if a given (valuation) profile can be modified by adding/deleting a limited number of individuals to make a given subset of distinguished individuals all socially qualified.
Yang and Dimitrov studied the
consensus-start-respecting rule and the liberal-start-respecting rule, and showed that  {\prob{GCAI}} for both rules are {\nph}, while {\prob{GCDI}} for both rules turned out to be polynomial-time solvable. However, it is left open whether {\prob{GCAI}} for these two rules are fixed-parameter tractable (\fpt) with respect to the number of distinguished candidates. We resolve the open questions in the affirmative by reducing {\prob{GCAI}} to a variant of the {\prob{Directed Steiner Tree}} problem (Theorems~\ref{thm-csr-fpt} and~\ref{thm-gcai-lsr-fpt}).
In addition, we strengthen the above-mentioned {\nphns} results by showing that {\prob{GCAI}} for both rules are {\wbh} with respect to the number of added individuals (Theorems~\ref{thm-gcai-csr-wbh} and \ref{thm-gcai-lsr}). Particularly, the {\wbhns} for the liberal-start-respecting rule holds even when restricted to a very special case where the qualifications of individuals satisfy the so-called consecutive ones property. However, for the consensus-start-respecting rule, the problem is polynomial-time solvable in this special case (Theorem~\ref{poly-gcai-csr-qc}). We also study a dual restriction where the disqualifications fulfill the consecutive ones property, and show that under this restriction {\prob{GCAI}} for both rules turn out to be polynomial-time solvable (Theorem~\ref{thm-gcai-lsr-dqc}). Our hardness reductions also lead to numerous lower bounds concerning kernelizations and exact algorithms (Corollaries~\ref{cor-1}--\ref{cor-last}).

\section{Related Works}

Since the first work of Yang and Dimitrov~\cite{DBLP:journals/aamas/YangD18} on the complexity of group control problems, several related problems have been proposed and studied very recently. It should be pointed out that in addition to the consensus-start-respecting rule and the liberal-start-respecting rule, Yang and Dimitrov~\cite{DBLP:journals/aamas/YangD18} also studied the class of consent rules proposed by Samet and Schmeidler~\cite{DBLP:journals/jet/SametS03}, and established a complete complexity landscape of the group control problems with respect to the consent quotes of these rules.
Erd\'{e}lyi, Reger, and Yang~\cite{DBLP:journals/aamas/ErdelyiRY20,DBLP:conf/aldt/ErdelyiRY17} studied the destructive counterpart of the group control problems, group bribery problems, and the problems of determining socially qualified individuals with incomplete information. Later, Erd\'{e}lyi and Yang~\cite{DBLP:conf/atal/Erdelyi020} studied the complexity of microbribery in group identification. Additionally, Boehmer~et~al.~\cite{DBLP:conf/ijcai/BoehmerBKL20} also considered numerous bribery problems in group identification through the lens of parameterized complexity. Junker~\cite{Junker2022,Junker2022b} later extended the results of Boehmer~et~al.~and studied some other variants of strategic problems in group identification. Motivated by applications in information diffusion in social networks, Bla\v{z}ej~\cite{Blaej2022} studied some generalization of group control problems for the two procedural rules mentioned above. Very recently, Yang and Dimitrov~\cite{YangDimitrov2022MSS} studied group control problems for the class of consent rules when restricted to cyclic domains which contain the domains studied in the paper as special cases. They showed that these problems, being computationally hard to solve in general, become polynomial-time solvable when the input profile falls into the category of cyclic domains.

Voting problems restricted to special domains have been widely studied in the literature. Particularly, the consecutive domain has been studied as an analog of single-peaked domain for dichotomous preferences (cf.~\cite{DBLP:journals/jair/BrandtBHH15,DBLP:journals/iandc/FaliszewskiHHR11}). This domain has been also studied under the name {\textit{candidate interval}}  (cf.~\cite{DBLP:conf/aaai/Peters18,DBLP:conf/ijcai/ElkindL15,DBLP:conf/atal/LiuG16,DBLP:conf/ijcai/Yang19a}). It is known that many voting problems which are {\nph} in general become polynomial-time solvable when restricted to this domain, with only a few exceptions (cf.~\cite{DBLP:journals/jair/BetzlerSU13,DBLP:journals/tcs/SkowronYFE15}). For further discussions on the complexity of voting problems in restricted domains, we refer to~\cite{structuredpreferencesElkindLP,Hemaspaandra2016,DBLP:journals/aarc/Karpov22,DBLP:journals/corr/abs-2205-09092} for comprehensive surveys.

It should be also noted that the consecutive domain is equivalent to the so-called consecutive ones property of $(0,1)$-matrices which has a long research history and has found significant applications in a broad range of areas (cf.\ the survey~\cite{DBLP:journals/eatcs/Dom09} and references therein). Recall that a $(0,1)$-matrix satisfies the consecutive ones property if its columns can be permuted so that in each row all~$1$s are consecutive.

\section{Preliminaries}
\label{sec_notation}
Throughout this paper we will need the following basic ingredients.
For an integer~$i$,~$[i]$ is the set of positive integers no greater than~$i$.

\subsection{Social Aggregation Rules}
Let~$N$ be a set of~$n$ individuals. Each individual $a\in N$
has an opinion who from the set~$N$ possess a certain qualification and who
do not. For $a^{\prime }\in N$, we write $\varphi (a,a^{\prime})=1$ if~$a$ qualifies~$a^{\prime}$, and write $\varphi (a,a^{\prime})=0$ if~$a$ disqualifies~$a^{\prime}$. The mapping $\varphi : N\times N\rightarrow \left\{ 0,1\right\} $ is called a \textit{profile}
over~$N$. A \textit{social aggregation rule} is a function~$f$ assigning a subset $f(\varphi ,T)\subseteq T$ to each pair $(\varphi ,T)$ of a
profile~$\varphi $ over~$N$ and a subset $T\subseteq N$. The members of $f(\varphi ,T)$ are called~$f$ {\it{socially qualified individuals}} in~$T$ at the profile~$\varphi$.

In what follows we focus in our analysis on two procedural rules: the consensus-start-respecting
rule and the liberal-start-respecting rule. The reader is referred to~\cite{DBLP:journals/mss/DimitrovSX07} for
axiomatic characterizations of these rules.

\begin{description}
\item[Consensus-start-respecting rule~$f^{\csr}$]
This rule determines the socially qualified individuals iteratively. First, all individuals qualified by everyone are considered socially qualified. Then, in each
  iteration, all individuals who are qualified by at least one of
  the currently socially qualified individuals are added to the set of
  socially qualified individuals. The iterations terminate when no new
  individuals can be added this way. Formally, for every~$T\subseteq N$, let
  \begin{displaymath}
    K_0^{\text{C}}(\varphi, T)=\{a \in T \mid \forall(a' \in T)[\varphi(a', a) =1]\}.
  \end{displaymath}
  For each~$\ell=1$,~$2$, \dots, let $K_\ell^{\text{C}}(\varphi, T)=K_{\ell-1}^{\text{C}}(\varphi, T) \cup \{a \in T \mid \exists (a' \in K_{\ell-1}^{\text{C}}(\varphi, T))[\varphi(a', a) =1]\}$. 
  Then, $f^{\csr}(\varphi, T)=K_\ell^{\text{C}}(\varphi, T)$ for some integer~$\ell$ such
  that $K_\ell^{\text{C}}(\varphi,T)=K_{\ell-1}^{\text{C}}(\varphi, T)$.

\item[Liberal-start-respecting rule~$f^{\lsr}$] This rule
  is analogous to $f^{\csr}$ with only the difference that the initial
  socially qualified individuals are those who qualify themselves. In
  particular, for every~$T\subseteq N$, let $K_0^{\text{L}}(\varphi, T)=\{a \in T\mid \varphi(a, a)=1\}$. 
  For each integer~$\ell=1$,~$2$, \dots, let $K_\ell^{\text{L}}(\varphi, T)=K_{\ell-1}^{\text{L}}(\varphi, T) \cup \{a \in T \mid \exists (a' \in K_{\ell-1}^{\text{L}}(\varphi, T))[\varphi(a', a) =1]\}$. 
  Then, $f^{\lsr}(\varphi, T)=K_\ell^{\text{L}}(\varphi, T)$ for some integer~$\ell$ such
  that $K_\ell^{\text{L}}(\varphi, T)=K_{\ell-1}^{\text{L}}(\varphi, T)$.
\end{description}
It should be noted that when $K_0^{\text{C}}(\varphi, T)=\emptyset$ (resp.\ $K_0^{\text{L}}(\varphi, T)=\emptyset$) we have that $f^{\csr}(\varphi, T)=\emptyset$ (resp.\ $f^{\lsr}(\varphi, T)=\emptyset$).

\subsection{Consecutive Domains}
Let $\rhd=(a_1, a_2, \dots, a_n)$ be a linear order over~$N$. For each individual $a\in N$, let
\[\varphi_{\rhd}(a)=(\varphi(a, a_1), \varphi(a, a_2), \dots, \varphi(a, a_n)).\]
We say that~$\varphi_{\rhd}(a)$ is {\it{qualifying consecutive}}~(QC) with respect to the order~$\rhd$ if all~$1$s are consecutive in~$\varphi_{\rhd}(a)$, i.e., there are $i, j\in [n]$ such that $i\leq j$, $\varphi(a, a_x)=1$ for all~$x$ such that $i\leq x\leq j$, and $\varphi(a, a_x)=0$ for all other possible values of~$x$.
We say that $\varphi_{\rhd}(a)$ is {\it{disqualifying consecutive}}~(DQC) with respect to ~$\rhd$ if $0$s are consecutive in $\varphi_{\rhd}(a)$.
We say that~$\varphi$ is~QC (resp.\ DQC) if there is at least one linear order~$\rhd$ over~$N$ with respect to which every $\varphi_{\rhd}(a)$ where $a\in N$ is~QC (resp.\ DQC).

It is immediately clear from the above definition that checking if a profile is~QC or~DQC
is equivalent to checking if a $0$-$1$ matrix satisfies the consecutive ones property, which can be done in polynomial-time (cf.~\cite{DBLP:journals/jair/PetersL20,DBLP:journals/jcss/BoothL76,DBLP:journals/jal/Hsu02}).

\subsection{Group Control}
\label{sec-problem-formulations}
Let us now formally state the group control problem we study. Let~$f$ be a social aggregation rule.

\EP
{Group Control by Adding Individuals (GCAI)}
{A $5$-tuple $(N, \varphi, S, T, k)$
    of  a set~$N$ of individuals, a profile~$\varphi$ over~$N$, two nonempty subsets~$S, T \subseteq N$ such
    that~$S \subseteq T$, and an integer~$k$.}
{Is there a subset
    $U\subseteq N \setminus T$ such that~$|U|\leq k$ and
    $S\subseteq f(\varphi, T \cup U)$?
}

In what follows, we call members of~$S$ {\it{distinguished individuals}}.

\subsection{Parameterized Complexity}
A {\it{parameterized problem}} is subset $\Sigma^*\times \mathbb{N}$ where~$\Sigma$ is a fixed alphabet. A parameterized problem is {\fpt} if there is an algorithm so that for each instance $(X, \kappa)$ of the problem the algorithm determines correctly if $(X, \kappa)$ is a {\yesins} in time $f(\kappa)\cdot \abs{X}^{\bigo{1}}$, where~$f$ is a computable function in the parameter~$\kappa$. The following hierarchy has been developed to classify parameterized problems:
\[{\fpt}\subseteq \wa\subseteq \wb\subseteq \cdots \subseteq \xp.\]
A parameterized problem is {\wbh} if all problems in {\wb} are parameterized reducible to the problem. {\wbh} problems do not admit any {\fpt}-algorithms unless the above hierarchy collapses to some level.

A {\it{kernelization}} of an {\fpt} problem~$P$ is an algorithm which takes an instance $(X, \kappa)$ of~$P$ as input and outputs an instance $(X', \kappa')$ of~$P$ such that
\begin{enumerate}
    \item[(1)] the algorithm runs in polynomial time in the size of~$(X, \kappa)$,
    \item[(2)] $(X, \kappa)$ is a {\yesins} if and only if $(X', \kappa')$ is a {\yesins}, and
    \item[(3)] $\abs{X'}\leq g(\kappa)$ for some computable function~$g$ in~$\kappa$.
\end{enumerate}
If~$P$ has a kernelization where~$g$ is a polynomial, we say that~$P$ admits a polynomial kernel.

For further discussions on parameterized complexity, we refer to~\cite{Cygan2015,DBLP:series/txcs/DowneyF13}.

\subsection{Useful Graph Problems}
In the following, we introduce some useful graph problems for our study.
We assume the reader is familiar with the basics in graph theory~\cite{DBLP:books/daglib/0022205,Douglas2000}.

A {\textit{bipartite graph}} is a graph~$G$ whose vertices can be divided into two disjoint sets $R$ and $B$ so that the edges of~$G$ are only between~$R$ and~$B$. A vertex~$v$ {\textit{dominates}} another vertex~$u$ if there is an edge between them. For two disjoint subsets~$A$ and~$B$ of vertices, we say that~$A$ dominates~$B$ if every vertex in~$B$ is dominated by at least one vertex in~$A$.

\EP
{Red-Blue Dominating Set (RBDS)}
{A bipartite graph~$G=(R\cup B,E)$ and an integer~$\kappa$.}
{Is there a subset~$R'\subseteq R$ of at most~$\kappa$ vertices dominating~$B$?}

{\prob{RBDS}} is a well-known {\nph} problem, and from the parameterized complexity point of view it is {\wbh} with respect to~$\kappa$~\cite{DBLP:series/txcs/DowneyF13}. We will use this problem to establish our fixed-parameter intractability results.

Some of our problems are solved by reducing to a variant of the {\prob{Directed Steiner Tree}} problem defined below.
For a graph (resp.\  digraph)~$G$, we use~$V(G)$ to denote its vertex set, and use~$E(G)$ to denote its edge (resp.\ arc) set. For a subset~$J$ of edges (resp.\ arcs) of~$G$,~$V(J)$ is the set of vertices incident with edges (resp.\ arcs)  in~$J$.

\EP
{Directed Steiner Tree (\prob{DST})}
{A digraph~$G$, a subset $X\subseteq V(G)$ of vertices in~$G$ called terminals, a vertex $u\in V(G)\setminus X$  called the root, a function $w: E(G)\rightarrow \mathbb{N}\cup \{0\}$ assigning to each arc an integer weight, and an integer~$p$.}
{Is there a subset $J\subseteq E(G)$  so that

(1) $\sum_{e\in J}w(e)\leq p$; and

(2) for every $x\in X$, there is a directed path from~$u$ to~$x$ in the subgraph of~$G$ induced by~$J$?}

It is known that {\prob{DST} is {\nph} and, moreover, it is {\fpt} with respect to the number of terminals~$\abs{X}$~\cite{DBLP:journals/networks/DreyfusW71,DBLP:journals/siamdm/GuoNS11,DBLP:conf/icde/DingYWQZL07}. More precisely, {\prob{DST}}
 can be solved in $\bigos{2^{\abs{X}}}$ time~\cite{DBLP:journals/networks/DreyfusW71,DBLP:journals/siamdm/GuoNS11,DBLP:conf/stoc/BjorklundHKK07,DBLP:journals/mst/FuchsKMRRW07}\footnote{${\bigos{\cdot}}$ is the ${\bigo{\cdot}}$ notion with the polynomial factor being ignored.}.

\EP
{Directed Vertex Weighted Steiner Tree (DVWST)}
{A digraph~$G$, a subset $X\subseteq V(G)$ of terminals, a vertex $u\in V(G)\setminus X$ called the root, a function $w: V(G)\rightarrow \mathbb{N}\cup \{0\}$ assigning to each vertex an integer weight, and an integer~$p$.}
{Is there a subset $J\subseteq V(G)\setminus (X\cup \{u\})$ so that

(1) $\sum_{v\in J}w(v)\leq p$; and

(2) for every $x\in X$, there is a directed path from~$u$ to~$x$ in the subgraph of~$G$ induced by $J\cup \{u\}\cup X$?}

We shall use {\prob{DVWST}} as an intermediate problem to established our {\fpt}-results. Note that DVWST can be trivially transformed into DAWST in polynomial time by splitting each node into two nodes connected by one arc between them with the same weight as the original node. Therefore, DVWST is also {\fpt} with respect to the number of terminals (Theorem~\ref{thm-dvwst-fpt}). 

\section{Our Results}
\label{sec-results}
We shall first study two {\fpt}-algorithms for {\prob{GCAI}} in the general domain, and then we explore the complexity of {\prob{GCAI}} restricted to the~QC and~DQC domains.

\subsection{The General Domain}
First, we resolve the open questions regarding {\prob{GCAI}} for $f^{\text{CSR}}$ and $f^{\text{LSR}}$ in the affirmative, starting with the one for~$f^{\text{CSR}}$.
To this end, we first show that the {\prob{DVWST}} problem is {\fpt} with respect to the number of terminals.

\begin{theorem}
\label{thm-dvwst-fpt}
{\prob{DVWST}} can be solved in~{$\bigos{2^{\ell}}$} time where~$\ell$ is the number of terminals.
\end{theorem}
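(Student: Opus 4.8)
The plan is to reduce \prob{DVWST} to the arc-weighted \prob{DST} problem by the standard vertex-splitting gadget, making sure that the reduction does not increase the number of terminals, and then to invoke the known $\bigos{2^{\abs{X}}}$ algorithm for \prob{DST} quoted above. Since the gadget maps an instance with~$\ell$ terminals to a \prob{DST} instance with the same~$\ell$ terminals and runs in polynomial time, this yields the claimed running time.

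Concretely, given a \prob{DVWST} instance $(G, X, u, w, p)$, I would build a digraph~$G'$ as follows. Every vertex $v\in V(G)\setminus(X\cup\{u\})$ (call these the \emph{internal} vertices) is replaced by two vertices $v^{\text{in}}$ and $v^{\text{out}}$ joined by an arc $(v^{\text{in}}, v^{\text{out}})$ of weight~$w(v)$; the root~$u$ and every terminal $x\in X$ are kept as single vertices. Each arc $(a,b)\in E(G)$ is re-attached in the obvious way---it leaves $a^{\text{out}}$ if~$a$ is internal and leaves~$a$ otherwise, and it enters $b^{\text{in}}$ if~$b$ is internal and enters~$b$ otherwise---and is assigned weight~$0$. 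The root of the new instance is still~$u$, the terminal set is still~$X$, and the budget is still~$p$. In particular $u\notin X$ in~$G'$, so this is a legal \prob{DST} instance, the transformation is computable in polynomial time, and $\abs{X}$ is unchanged.

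For correctness, the key observation is that for each internal vertex~$v$ the vertex $v^{\text{in}}$ has out-degree~$1$ (its only out-arc being $(v^{\text{in}},v^{\text{out}})$) and $v^{\text{out}}$ has in-degree~$1$ (its only in-arc being $(v^{\text{in}},v^{\text{out}})$); hence any directed path of~$G'$ that visits either copy of~$v$ must use the arc $(v^{\text{in}},v^{\text{out}})$. Therefore, from any $J'\subseteq E(G')$ with $\sum_{e\in J'}w(e)\le p$ witnessing a \yesins{} of the \prob{DST} instance, the set $J=\{v : (v^{\text{in}},v^{\text{out}})\in J'\}$ satisfies $\sum_{v\in J}w(v)\le p$, and projecting the root-to-terminal paths of~$G'$ back to~$G$ gives, for each $x\in X$, a directed $u$--$x$ walk using only vertices in $J\cup\{u\}\cup X$, so~$J$ is a \prob{DVWST} solution. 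Conversely, given a \prob{DVWST} solution~$J$, the set $J'$ consisting of all arcs $(v^{\text{in}},v^{\text{out}})$ with $v\in J$ together with all the weight-$0$ re-attached arcs of~$G'$ has weight $\sum_{v\in J}w(v)\le p$, and lifting each $u$--$x$ path of~$G$ through $J\cup\{u\}\cup X$ to~$G'$ (inserting $(v^{\text{in}},v^{\text{out}})$ for each internal~$v$ on it) shows~$J'$ is a \prob{DST} solution. Running the $\bigos{2^{\abs{X}}}$ algorithm for \prob{DST} on~$G'$ thus decides the original instance in $\bigos{2^{\ell}}$ time.

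There is no genuinely hard step here; the only point that needs care is the parameter bookkeeping, namely that the reduction must preserve the number of terminals. This is why terminals and the root are left unsplit: had one split all vertices, a terminal would be reachable only at its ``in''-copy although a transit path might require its ``out''-copy, whereas keeping $X\cup\{u\}$ intact is both correct (in \prob{DVWST} the solution~$J$ may not contain terminals or the root, and the connecting paths are allowed to pass through~$X$ for free) and keeps $\abs{X}$ exactly equal to~$\ell$.
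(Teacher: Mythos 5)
Your proposal is correct and follows essentially the same route as the paper's own proof: the identical vertex-splitting reduction to \prob{DST} (splitting only the non-terminal, non-root vertices, putting weight $w(v)$ on the arc $(v^{\text{in}},v^{\text{out}})$ and weight~$0$ on the re-attached arcs), followed by an appeal to the $\bigos{2^{\abs{X}}}$ algorithm for \prob{DST}. The paper's correctness argument likewise hinges on the observation that $v^{\text{in}}$ has $v^{\text{out}}$ as its unique outneighbor, so no new ideas or differences to report.
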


The proof of Theorem~\ref{thm-dvwst-fpt} is deferred to the Appendix. We start with a lemma which suggests that to solve {\prob{GCAI}} for~$f^{\text{CSR}}$ we can make a guess on one of the individuals who are
qualified by all individuals in the final profile. This enables us to split an instance of {\prob{GCAI}} for~$f^{\text{CSR}}$ into polynomially many subinstances which are then solved via Theorem~\ref{thm-dvwst-fpt}.

In order to state the lemma,  we need the following notions.
The {\textit{incidence graph}} of a profile~$\varphi$ over~$N$, denoted~$G_{\varphi}$, is the digraph whose vertices are exactly the individuals in~$N$, and there is an arc from $a\in N$ to $a'\in N$ if and only if~$a$  qualifies~$a'$. Note that the incidence graph may contain loops.

\begin{lemma}
\label{lem-csr-property}
Let~$\varphi$ be a profile over a set~$N$ of individuals so that $f^{\memph{CSR}}(\varphi, N)\neq\emptyset$. Let $a\in N$ be an individual qualified by all individuals in~$N$ with respect to~$\varphi$. Then, for every $a'\in N\setminus \{a\}$, it holds that~$a'\in f^{\memph{CSR}}(\varphi, N)$ if and only if there is a directed path from~$a$ to~$a'$ in~$G_{\varphi}$.
\end{lemma}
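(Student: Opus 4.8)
The statement is an "if and only if" characterizing $f^{\text{CSR}}(\varphi, N)$ via reachability from a single universally-qualified individual $a$. The plan is to prove both directions by relating the iterative construction $K_\ell^{\text{C}}(\varphi, N)$ directly to directed reachability in $G_\varphi$. The crucial observation underpinning everything is that since $a$ is qualified by \emph{all} individuals in $N$, in particular everyone in $K_0^{\text{C}}(\varphi, N)$ qualifies $a$, and $K_0^{\text{C}}(\varphi, N)\neq\emptyset$ because $f^{\text{CSR}}(\varphi, N)\neq\emptyset$; hence $a$ itself enters the socially qualified set at the very first iteration, i.e.\ $a\in K_1^{\text{C}}(\varphi, N)\subseteq f^{\text{CSR}}(\varphi, N)$. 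So $a$ is a socially qualified individual, and moreover it is reached within one step from $K_0$.

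\textbf{The ``if'' direction.} Suppose there is a directed path $a = b_0 \to b_1 \to \dots \to b_m = a'$ in $G_\varphi$. I would argue by induction on $t$ that $b_t \in f^{\text{CSR}}(\varphi, N)$ for all $t$. The base case $b_0 = a$ is the observation above. For the inductive step, $b_{t-1}\in f^{\text{CSR}}(\varphi,N) = K_\ell^{\text{C}}(\varphi,N)$ for the stable $\ell$, and the arc $b_{t-1}\to b_t$ means $\varphi(b_{t-1}, b_t)=1$, so $b_t$ is qualified by a currently socially qualified individual and thus $b_t \in K_{\ell+1}^{\text{C}}(\varphi,N) = K_\ell^{\text{C}}(\varphi, N) = f^{\text{CSR}}(\varphi, N)$. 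Taking $t=m$ gives $a'\in f^{\text{CSR}}(\varphi,N)$.

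\textbf{The ``only if'' direction.} Suppose $a'\in f^{\text{CSR}}(\varphi, N)$, so $a'\in K_\ell^{\text{C}}(\varphi, N)$ for some $\ell$; take the smallest such $\ell$. I would induct on this minimal $\ell$. If $\ell = 0$, then $a'\in K_0^{\text{C}}(\varphi, N)$, meaning $a'$ is qualified by everyone; in particular $a$ qualifies $a'$ (since $a$ qualifies all of $N$... actually we need $a' $ qualified, so $\varphi(a, a')=1$ from $a'\in K_0$), giving a length-one path $a\to a'$. If $\ell \geq 1$, then since $\ell$ is minimal and $a'\neq a$... wait, we also must handle the possibility $a'\in K_0^{\text{C}}$ directly; in any case, for $\ell\geq 1$ there is some $a''\in K_{\ell-1}^{\text{C}}(\varphi, N)$ with $\varphi(a'', a')=1$. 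If $a'' = a$, we get the arc $a\to a'$ directly. Otherwise $a''\neq a$ and $a''\in f^{\text{CSR}}(\varphi,N)$ with a smaller witnessing index, so by induction there is a directed path from $a$ to $a''$; appending the arc $a''\to a'$ yields a directed path from $a$ to $a'$. One subtlety: when I peel off $a''$, its minimal witnessing index is at most $\ell - 1 < \ell$, which is what makes the induction well-founded; I should state the induction on $\min\{\ell : a'\in K_\ell^{\text{C}}(\varphi,N)\}$ carefully to make this rigorous.

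\textbf{Main obstacle.} There is no deep obstacle here—the result is essentially a restatement of the iterative definition in graph-reachability terms. The only points requiring care are (i) correctly invoking $f^{\text{CSR}}(\varphi,N)\neq\emptyset \Rightarrow K_0^{\text{C}}(\varphi,N)\neq\emptyset$ to get $a$ into the socially qualified set, and (ii) setting up the ``only if'' induction on the \emph{minimal} layer index so that the recursion on the predecessor $a''$ strictly decreases the parameter. I would present the argument in roughly the order above: first the observation about $a$, then the short ``if'' induction, then the ``only if'' induction.
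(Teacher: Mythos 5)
Your proof is correct and follows essentially the same route as the paper: both directions reduce to tracing the qualification chain, using $f^{\text{CSR}}(\varphi,N)\neq\emptyset \Rightarrow K_0^{\text{C}}(\varphi,N)\neq\emptyset$ together with the universal qualification of $a$ to place $a$ in the socially qualified set for the ``if'' direction, and walking the chain back to an initially qualified individual (who is qualified by $a$) for the ``only if'' direction. Your version merely makes explicit, via the induction on the minimal layer index, the step that the paper asserts directly, namely that any socially qualified $a'$ is reachable by a directed path from some member of $K_0^{\text{C}}(\varphi,N)$.
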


\begin{proof}
Let $\varphi$,~$a$, and~$a'$ be as stipulated in the lemma.
It is clear from the definition of $f^{\text{CSR}}$ that if there is a directed path from~$a$ to~$a'$, then~$a'\in f^{\text{CSR}}(\varphi, N)$. It remains to show the other direction. Assume that $a'\in f^{\text{CSR}}(\varphi, N)$. Due to the definition of~$f^{\text{CSR}}$, there must be an individual $b\in N$ who is in the initial set of socially qualified individuals, and there is a directed path from~$b$ to~$a'$ in the incidence graph~$G_{\varphi}$ of~$\varphi$. So,~$b$ is qualified by all individuals including~$a$, implying that there is a directed path from~$a$ to~$a'$ in~$G_{\varphi}$.
\end{proof}

Observe that if an individual in~$S$ qualifies another individual in~$S$, then if the former is socially qualified so is the latter. The following reduction rule implements this observation.

\begin{rerule}
\label{reduction-rule}
If there are two distinct individuals $a, a'\in S$  such that~$a$ qualifies~$a'$, move~$a'$ from~$S$ into~$T\setminus S$.
\end{rerule}

For a subset~$Y$ of vertices in a digraph~$G$, let
\[N^-_G(Y)=\{v\in V(G)\setminus Y \setmid \exists(u\in Y)[\arc{v}{u}\in \ARC{G}]\}\] be the set of inneighbors of vertices in~$Y$, and let
\[N^+_G(Y)=\{v\in V(G)\setminus Y \setmid \exists(u\in Y)[\arc{u}{v}\in \ARC{G}]\}\] be the set of outneighbors of vertices in~$Y$. {\textit{Merging}}~$Y$ is the operation that
 creates one vertex~$v_Y$ so that $N^-_{G}(\{v_Y\})=N^-_{G}(Y)$ and $N^+_{G}(\{v_Y\})=N^+_{G}(Y)$, and removes all vertices of~$Y$ from~$G$.
We call~$v_Y$ the {\textit{merging vertex}} of~$Y$.

Armed with Lemma~\ref{lem-csr-property} and Reduction Rule~\ref{reduction-rule}, we are ready to present our first {\fpt}-algorithm.

\begin{theorem}
\label{thm-csr-fpt}
{\prob{GCAI}} for $f^{\memph{CSR}}$ is {\emph\fpt} with respect to the number~$\ell$ of distinguished individuals. More precisely, it can be solved in $\bigos{2^{\ell}}$ time.
\end{theorem}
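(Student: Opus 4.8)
The plan is to reduce an instance of {\prob{GCAI}} for $f^{\csr}$ to polynomially many instances of {\prob{DVWST}}, each solvable in $\bigos{2^{\ell'}}$ time for some $\ell'\leq \ell$ by Theorem~\ref{thm-dvwst-fpt}, and thereby obtain an $\bigos{2^{\ell}}$ algorithm overall. First I would exhaustively apply Reduction Rule~\ref{reduction-rule}; this does not change the answer and only shrinks~$S$, so we may assume no distinguished individual qualifies another. By Lemma~\ref{lem-csr-property}, in any solution profile $\varphi$ restricted to $T\cup U$, an individual is $f^{\csr}$-socially qualified if and only if it is reachable in $G_{\varphi[T\cup U]}$ from some individual $a$ who is qualified by everyone in $T\cup U$. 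The key idea is to \emph{guess} such a vertex~$a$: since the initial set $K_0^{\text{C}}$ is nonempty in any solution (otherwise $f^{\csr}=\emptyset$ and $S$ cannot be made socially qualified, as $S\neq\emptyset$), there are at most $|N|$ choices for the branch vertex $a$, giving polynomially many subinstances.

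In the branch where we commit to $a$ being qualified by everyone, I would first discard (from $T\cup U$ and hence never add) every individual $b\in N\setminus T$ with $\varphi(b,a)=0$, since such a $b$ cannot appear in the final group; if some individual of $T$ disqualifies $a$, this branch fails. Now every remaining addable individual qualifies $a$, so adding individuals never destroys the property that $a\in K_0^{\text{C}}$. Reachability from $a$ in the incidence graph is monotone under adding vertices, so the task becomes: choose a set $U\subseteq N\setminus T$ with $|U|\leq k$ so that in $G_{\varphi}$ restricted to $T\cup U$ every vertex of $S$ is reachable from $a$. This is precisely a Steiner-connectivity requirement. I would build the {\prob{DVWST}} instance with digraph $G_\varphi$ (restricted to the surviving vertices), root~$a$, terminal set~$S$, and weights $w(v)=0$ for $v\in T$ and $w(v)=1$ for $v\in N\setminus T$, budget $p=k$; vertices of $T\cup\{a\}$ are ``free'' and always available, vertices outside $T$ cost one unit each. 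A solution $J\subseteq V(G)\setminus(S\cup\{a\})$ of weight $\leq k$ corresponds exactly to a valid $U=J\setminus T$ together with the forced-free transit vertices in $T$. One subtlety: the {\prob{DVWST}} definition forbids terminals and the root from lying in $J$, but by the reduction rule no vertex of $S$ needs to route through another vertex of $S$, and internal use of $a$ is never needed since $a$ is the root — so restricting intermediate vertices to $V(G)\setminus(S\cup\{a\})$ loses nothing. (Individuals in $S$ already reachable trivially satisfy their requirement.)

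For correctness I would argue both directions. If $(N,\varphi,S,T,k)$ is a {\yesins} with solution $U$, take any $a\in K_0^{\text{C}}(\varphi, T\cup U)$; every individual of $T\cup U$ qualifies $a$, so in the corresponding branch no surviving-vertex pruning removes any member of $T\cup U$, and by Lemma~\ref{lem-csr-property} each $s\in S$ is reachable from $a$ in $G_\varphi$ restricted to $T\cup U$, yielding a {\prob{DVWST}} solution $J\subseteq (T\cup U)\setminus(S\cup\{a\})$ of weight $|U\setminus T|\leq |U|\leq k$. Conversely, a {\prob{DVWST}} solution $J$ of weight $\leq k$ in the branch for $a$ gives $U:=J\setminus T$ with $|U|\leq k$ and $U\cap T=\emptyset$; since every vertex of $J$ (being a surviving vertex) qualifies $a$ and every vertex of $T$ qualifies $a$ in this branch, $a\in K_0^{\text{C}}(\varphi, T\cup U)$, and by Lemma~\ref{lem-csr-property} reachability of each $s\in S$ from $a$ in the induced subgraph (on $J\cup\{a\}\cup S\subseteq T\cup U\cup\{a\}\subseteq T\cup U$) implies $S\subseteq f^{\csr}(\varphi, T\cup U)$.

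For the running time: Reduction Rule~\ref{reduction-rule} and the pruning are polynomial; there are $\bigo{|N|}$ branches; each branch solves one {\prob{DVWST}} instance with $|S|\leq \ell$ terminals, which by Theorem~\ref{thm-dvwst-fpt} costs $\bigos{2^{\ell}}$. Hence the total is $\bigo{|N|}\cdot \bigos{2^{\ell}} = \bigos{2^{\ell}}$, giving the claimed bound. The main obstacle is getting the reduction watertight at the boundary — specifically verifying that the {\prob{DVWST}} constraint $J\cap(X\cup\{u\})=\emptyset$ does not cost us any solutions (handled by the reduction rule on $S$ and by $a$ being the root), and that the branch pruning of non-$a$-qualifiers is both necessary (else $a\notin K_0^{\text{C}}$) and harmless (such individuals can never be in a solution group for that branch); the zero/one weighting then transfers the cardinality budget exactly.
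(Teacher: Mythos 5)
Your proposal follows essentially the same route as the paper's proof: exhaustively apply Reduction Rule~\ref{reduction-rule}, guess the individual $a$ that is qualified by everyone in the final group (at most $|N|$ branches), prune the non-qualifiers of $a$ from $N\setminus T$, and reduce each branch to {\prob{DVWST}} with root reachable to terminals $S$ under $0/1$ vertex weights, invoking Theorem~\ref{thm-dvwst-fpt}. The only substantive difference is cosmetic --- the paper merges all of $f^{\csr}(\varphi,T)$ into a single root vertex, whereas you use $a$ itself as the root with all of $T$ free, which works equally well. There is, however, one small but genuine omission: your guess ranges over all of $N$, yet you declare the root $a$ ``free,'' which is wrong in the branch where $a\in N\setminus T$. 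There $a$ must itself be one of the added individuals and charged against the budget; since the {\prob{DVWST}} root never lies in $J$, your reduction would let such a branch report \yes{} with $U=J\setminus T$ not containing $a$, even though $a\notin T\cup U$ then breaks the premise $a\in K_0^{\text{C}}(\varphi,T\cup U)$ of your converse argument. The paper's fix is a one-liner --- move $a$ from $N\setminus T$ into $T$ and decrease $k$ by one at the start of that branch --- and with that patch your argument is sound and yields the claimed $\bigos{2^{\ell}}$ bound.
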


\begin{proof}
Let $I=(N, \varphi, S, T, k)$ be an instance of {\prob{GCAI}} for~$f^{\text{CSR}}$.
We first exhaustively apply Reduction Rule~\ref{reduction-rule} to~$I$ so that in the resulting instance no individual in~$S$ qualifies another different individual in~$S$.
Then, we split the instance into~$\abs{N}$ subinstances, each of which takes~$I$ and an individual~$a^*\in N$ as input, and determines if there exists $U\subseteq N\setminus T$ of at most~$k$ individuals so that $S\subseteq f^{\text{CSR}}(\varphi, T\cup U)$, $a^*\in T\cup U$, and all individuals in $T\cup U$ qualify~$a^*$. That is,~$a^*$ is our guessed individual who is in the initial set of socially qualified individuals in the final profile. 
Obviously, the original instance~$I$ is a {\yesins} if and only if at least one of the subinstances is a {\yesins}.

Now we focus on solving a subinstance with a guessed individual~$a^*$.
We assume that~$a^*$ is already included in~$T$, since otherwise we simply move~$a^*$ from $N\setminus T$ into~$T$ and decrease~$k$ by one.
As~$a^*$ is supposed to be qualified by all individuals in the final profile, if there is an individual in~$T$ who disqualifies~$a^*$, we directly discard this subinstance and proceed to the next one. Otherwise, we remove from the subinstance all individuals in $N\setminus T$ who disqualify~$a^*$ (this includes deleting them from both~$N$ and from $\varphi$).
We shall solve the subinstance by reducing it to {\prob{DVWST}}.
We create an instance of {\prob{DVWST}} as follows. The digraph~$G$ in the {\prob{DVWST}} instance is obtained from the incidence graph of~$\varphi$ over~$N$ by merging $f^{\text{CSR}}(\varphi, T)$. Obviously,~$a^*\in f^{\text{CSR}}(\varphi, T)$, and so $f^{\text{CSR}}(\varphi, T)$ is nonempty. Let~$u$ denote the merging vertex of $f^{\text{CSR}}(\varphi, T)$, and we let it be the given root of the {\prob{DVWST}} instance.
Furthermore, we let every individual in $N\setminus T$ have weight~$1$, and all the other individuals have weight~$0$. The terminals are those in~$S$, and the weight upper bound is $p=k$.

If there is a subset~$J$ of vertices of total weight at most~$k$ so that there is a directed path from~$u$ to every terminal $a\in S$, then~$a$ is $f^{\text{CSR}}$ socially qualified in $(\varphi, J\cup S)$ due to Lemma~\ref{lem-csr-property}. Moreover, as all individuals in $N\setminus T$ have weight~$1$ and all the other individuals have weight~$0$, we know that~$J$ contains at most~$k$ individuals from $N\setminus T$, implying that $J\cap (N\setminus T)$ is a \yes-certificate for the subinstance. For the opposite direction, if there is a subset $U\subseteq N\setminus T$ of individuals so that for all $a\in S$ it holds that $a\in f^{\text{CSR}}(\varphi, T\cup U)$, then due to Lemma~\ref{lem-csr-property} there is a directed path from~$u$ to~$a$ in the subgraph of~$G$ induced by $T\cup U$. As the total weight of vertices in~$U$ is at most~$k$, the instance of {\prob{DVWST}} is a {\yesins}.

Regarding the running time, let $\ell=\abs{S}$ be the number of distinguished individuals. As there are at most~$\abs{N}$ subinstances to consider and each of them can be solved in~$\bigos{2^{\ell}}$ time (Theorem~\ref{thm-dvwst-fpt}), the whole algorithm runs in~$\bigos{2^{\ell}}$ time.
\end{proof}

An analogous result for the liberal-start-respecting rule exists. In fact, the algorithm in this case is simpler.
We first use a reduction rule to refine the structure of the instance.

\begin{rerule}
\label{reduction-rule-lsr}
If there are $a, a'\in S$  such that~$a$ qualifies~$a'$, move~$a'$ from~$S$ into~$T\setminus S$.
\end{rerule}

Notice that the difference between Reduction Rule~\ref{reduction-rule} and Reduction Rule~\ref{reduction-rule-lsr} is that in the second one~$a$ and~$a'$ may be the same individual. The reason is that under~$f^{\text{LSR}}$ everyone qualifying herself is already socially qualified without needing other individuals' qualifications.

\begin{theorem}
\label{thm-gcai-lsr-fpt}
{\prob{GCAI}} for $f^{\memph{LSR}}$ is {\memph\fpt} with respect to the number~$\ell$ of distinguished individuals. More precisely, it can be solved in~$\bigos{2^{\ell}}$ time.
\end{theorem}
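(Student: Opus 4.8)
The plan is to mirror the proof of Theorem~\ref{thm-csr-fpt}, but exploit the fact that for $f^{\text{LSR}}$ the initial set of socially qualified individuals is determined entirely by self-qualification, so no guessing over a "root" individual is needed. First I would exhaustively apply Reduction Rule~\ref{reduction-rule-lsr} to the input instance $I=(N,\varphi,S,T,k)$, so that in the resulting instance no individual in~$S$ qualifies another individual in~$S$ and, in particular, no $a\in S$ qualifies herself (any such~$a$ is already socially qualified as soon as it is in the set, hence can be moved out of~$S$). If~$S$ becomes empty the instance is trivially a \yesins.

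Next I would set up the analogue of Lemma~\ref{lem-csr-property}: for a profile~$\psi$ over a set~$M$, an individual $a'\in M$ with $\varphi(a',a')=0$ lies in $f^{\text{LSR}}(\psi,M)$ if and only if there is a directed path in the incidence graph~$G_{\psi}$ from some self-qualifying individual to~$a'$. This is immediate from the iterative definition of $K_\ell^{\text{L}}$, exactly as in the CSR case but with the set of self-qualifiers playing the role of $K_0$. The key structural consequence is that, unlike CSR, the "source set" is fixed and known: it is $\{a\in T\mid\varphi(a,a)=1\}$ together with whatever self-qualifying individuals we add from $N\setminus T$ — but adding a non-self-qualifying individual from $N\setminus T$ never creates a new source, only a new intermediate/terminal node. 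So there is nothing to guess.

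I would then reduce directly to {\prob{DVWST}}. Build~$G$ from the incidence graph of~$\varphi$ over~$N$ by merging $K_0^{\text{L}}(\varphi,T)=\{a\in T\mid\varphi(a,a)=1\}$ into a single vertex~$u$, which we take as the root. (If $K_0^{\text{L}}(\varphi,T)=\emptyset$, then since no individual in~$S$ self-qualifies and added individuals can only help if reachable from a source, the instance is a \noins\ — one must first reason that a self-qualifying individual in $N\setminus T$ would itself have to be "added", so it suffices to also loop over guessing one self-qualifying start individual $a^\star\in N\setminus T$, forcing it in and decrementing~$k$, exactly as in Theorem~\ref{thm-csr-fpt}; I would include this case split to be safe.) Assign weight~$1$ to every individual in $N\setminus T$ and weight~$0$ to all others, let the terminals be~$S$, and set $p=k$. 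The correctness argument is the same two-way implication as in Theorem~\ref{thm-csr-fpt}, now invoking the LSR version of the reachability lemma: a weight-$\le k$ solution~$J$ of {\prob{DVWST}} yields $J\cap(N\setminus T)$ as an \yes-certificate, and conversely any valid $U\subseteq N\setminus T$ gives directed paths from~$u$ to every $a\in S$ in the subgraph induced by $T\cup U$. For the running time, with $\ell=|S|$ there are at most $|N|+1$ subinstances (one with no forced start vertex, and at most~$|N|$ with a guessed self-qualifying $a^\star\in N\setminus T$), each solved in $\bigos{2^{\ell}}$ time by Theorem~\ref{thm-dvwst-fpt}, so the whole algorithm runs in $\bigos{2^{\ell}}$ time.

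The main obstacle is the delicate handling of the case $K_0^{\text{L}}(\varphi,T)=\emptyset$: unlike CSR, where the guessed individual~$a^*$ is forced to be qualified by \emph{everyone} (a strong constraint that prunes the vertex set), here an added self-qualifying individual's only special property is the loop at itself, so I must argue carefully that it never hurts to also treat such an individual as a candidate root and that no "double counting" of its weight occurs. Once that bookkeeping is pinned down, everything else is a routine transcription of the CSR proof.
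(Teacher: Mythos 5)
There is a genuine gap in how you handle self-qualifying individuals in $N\setminus T$. Your construction merges only $K_0^{\text{L}}(\varphi,T)$ into the root, so a self-qualifying individual $v\in N\setminus T$ becomes reachable from the root only if some individual already reachable from $K_0^{\text{L}}(\varphi,T)$ qualifies~$v$. But under $f^{\text{LSR}}$ such a~$v$, once added, is an \emph{independent new source} (it enters $K_0^{\text{L}}$ of the final profile by virtue of its loop alone), and a solution may need \emph{several} such sources simultaneously. Your patch of guessing a single $a^\star\in N\setminus T$ does not repair this: take $S=\{s_1,s_2\}$ with no self-qualifier in~$T$, and $v_1,v_2\in N\setminus T$ each qualifying only herself and $s_i$ respectively; then $U=\{v_1,v_2\}$ with $k=2$ is a valid certificate, but in every one of your $|N|+1$ subinstances at least one of $s_1,s_2$ is unreachable from the root, so your algorithm answers \no\ on a \yesins. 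The same failure occurs even when $K_0^{\text{L}}(\varphi,T)\neq\emptyset$, so the problem is not confined to the empty-source case you flag as delicate.

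The paper avoids all of this with a simpler construction: it creates one \emph{new} vertex~$u$ as the root and adds arcs from~$u$ to \emph{every} individual $a\in N$ with $\varphi(a,a)=1$ --- including those in $N\setminus T$. Since individuals in $N\setminus T$ carry weight~$1$, routing a path through such a self-qualifier correctly charges one unit of budget, and arbitrarily many independent sources from $N\setminus T$ can be used within a single {\prob{DVWST}} instance. No guessing and no case split on $K_0^{\text{L}}(\varphi,T)=\emptyset$ is needed; a single {\prob{DVWST}} call solved via Theorem~\ref{thm-dvwst-fpt} gives the $\bigos{2^{\ell}}$ bound. Your proof becomes correct if you replace the merging step by this new-root construction; the rest of your argument (Reduction Rule~\ref{reduction-rule-lsr}, the reachability characterization, the weight assignment, and the two-way correctness argument) then goes through as you describe.
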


\begin{proof}
Let $I=(N, \varphi, S, T, k)$ be an instance of {\prob{GCAI}} for~$f^{\text{LSR}}$.
We first apply Reduction Rule~\ref{reduction-rule-lsr} to~$I$ iteratively until it does not apply. Then, we solve the instance by reducing it to a {\prob{DVWST}} instance as follows. The digraph~$G$ of the {\prob{DVWST}} instance is obtained from the incidence graph of~$\varphi$ over~$N$ by creating one new vertex~$u$ and creating arcs from~$u$ to everyone in~$N$ qualifying herself (i.e., $\{a\in N \setmid \varphi(a,a)=1\}$).
We set~$u$ as the root, and set~$S$ as the set of the terminals. Finally, we let the weight of all individuals in $N\setminus T$ be~$1$, and those of others be~$0$.
Similar to the analysis in the proof of Theorem~\ref{thm-csr-fpt}, we can show that the two instances are equivalent. The running time of the algorithm follows from Theorem~\ref{thm-dvwst-fpt}.
\end{proof}

Next, we strengthen the {\nphns} of {\prob{GCAI}} for the consensus-start-respecting rule established in~\cite{DBLP:journals/aamas/YangD18} by showing its {\wbhns} with respect to the number of added individuals. We also have a {\wbhns} result for the liberal-start-respecting rule, but we will present it in the next section because this result holds even in a specific domain which is not the focus of this section.

\begin{theorem}
\label{thm-gcai-csr-wbh}
{\prob{GCAI}} for $f^{\memph{CSR}}$ is {\memph\wbh} with respect to the number of added individuals.
\end{theorem}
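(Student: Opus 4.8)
The plan is to reduce from \prob{RBDS}, which is \wbh\ with respect to the solution size~$\kappa$, to \prob{GCAI} for $f^{\text{CSR}}$ with the number of added individuals playing the role of~$\kappa$. Given an instance $(G=(R\cup B, E), \kappa)$ of \prob{RBDS}, I would build a profile whose individuals include a copy of each vertex in $R\cup B$, together with a few auxiliary individuals used to seed the consensus process. The key design goal is: choosing $R'\subseteq R$ to add should make exactly those $b\in B$ dominated by~$R'$ become socially qualified, so that the set~$S$ of distinguished individuals can be taken (essentially) to be~$B$, and $S\subseteq f^{\text{CSR}}(\varphi, T\cup U)$ holds iff $U$ encodes a dominating set.

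First I would set up $T$ to contain all vertices of~$B$, all the auxiliary seed individuals, and nothing from~$R$; the addable pool $N\setminus T$ is exactly (the copies of)~$R$. To make the consensus rule start, I introduce one auxiliary individual~$d$ (or a small gadget) that is qualified by \emph{everyone} — including every $r\in R$ and every $b\in B$ and~$d$ itself — so that $d\in K_0^{\text{C}}$ as soon as the profile is ``clean''; I must be careful that adding any subset of~$R$ does not destroy the unanimous qualification of~$d$, which is why~$d$ is qualified by all of~$R$ as well. Next, $d$ qualifies every $r\in R$ (so once some $r$ is added it immediately becomes socially qualified), and each $r\in R$ qualifies exactly the $b\in B$ adjacent to it in~$G$. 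No $b\in B$ is qualified by~$d$ directly, and the $b$'s qualify nobody useful (or only themselves/$d$), so the only route for a given $b\in B$ to become socially qualified is via some added $r\in R'$ adjacent to it. Then $b\in f^{\text{CSR}}(\varphi, T\cup U)$ for all $b\in B$ precisely when $U\cap R$ dominates~$B$, and $|U|\le\kappa$ corresponds to $|R'|\le\kappa$. Setting $S = B$ (after possibly applying Reduction Rule~\ref{reduction-rule} or choosing qualifications among $B$ to be empty so the rule is vacuous) and $k=\kappa$ completes the construction.

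For correctness I would argue both directions. If $R'$ dominates~$B$ with $|R'|\le\kappa$, set $U=R'$: $d$ is in $K_0^{\text{C}}$, then all of $R'$ enters at step~1, then every $b\in B$ enters at step~2 since each is adjacent to some $r\in R'$, so $S=B\subseteq f^{\text{CSR}}(\varphi, T\cup U)$. Conversely, if $U\subseteq N\setminus T = R$ with $|U|\le\kappa$ and $S\subseteq f^{\text{CSR}}(\varphi, T\cup U)$, then by Lemma~\ref{lem-csr-property} (with $a^*=d$, the unique individual qualified by all) every $b\in B$ is reachable from~$d$ in $G_\varphi$ restricted to $T\cup U$; tracing the only possible paths $d\to r\to b$ shows each $b$ has an in-neighbour in $U\cap R$ adjacent to it, so $U\cap R$ is a dominating set of size at most~$\kappa$. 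Finally the parameter: the number of added individuals is exactly $k=\kappa$, so this is a parameterized reduction, and \nphns\ in the classical sense follows as well since the reduction is polynomial.

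The main obstacle, and the part requiring care, is guaranteeing that the ``start'' of the consensus process is entirely under control regardless of which $R'$ is added: the individual~$d$ must remain qualified by \emph{every} individual present in $T\cup U$ for all admissible~$U$, and simultaneously no $b\in B$ may sneak into $K_0^{\text{C}}$ or become reachable except through an added $r$ adjacent to it. Ruling out unintended paths — e.g.\ ensuring the $b$'s do not qualify one another, that the auxiliary gadget does not directly qualify any $b$, and that added $r$'s cannot ``help'' non-adjacent $b$'s through a chain — is where the bookkeeping lies, though each check is routine once the qualification relation is written out explicitly. A secondary point is handling Reduction Rule~\ref{reduction-rule}: it is cleanest to let every $b\in B$ disqualify every other individual (so no member of $S$ qualifies another) and, if needed, replace each $b\in B$ by a pair of twin individuals to force both to be separately reachable, which prevents the reduction rule from collapsing~$S$.
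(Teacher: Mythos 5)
Your proposal is correct and follows essentially the same route as the paper: a reduction from \prob{RBDS} in which the addable pool is~$R$, the distinguished set is~$B$, each~$r$ qualifies exactly its neighbours in~$B$, and the $b$'s disqualify one another so that a~$b$ becomes socially qualified precisely when some added~$r$ dominates it. The only difference is cosmetic: the paper seeds the consensus process by making every $r\in R$ qualified by all individuals (so added~$r$'s land directly in $K_0^{\text{C}}$), whereas you introduce an auxiliary individual~$d$ qualified by everyone that in turn qualifies the~$r$'s; both devices work and your correctness checks are the right ones.
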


\begin{proof}
We prove the theorem via a reduction from the {\prob{RBDS}} problem. Let $(G, \kappa)$ be an {\prob{RBDS}} instance, where $G=(R\cup B, E)$ is a bipartite graph with the vertex partition $(R, B)$, and~$\kappa$ is an integer. We create an instance of {\prob{GCAI}} for $f^{\text{CSR}}$ as follows. First, we create for each vertex in~$G$ an individual denoted by the same symbol for notational brevity. Let $N=R\cup B$ and let $S=T=B$. We define a profile~$\varphi$ over~$N$ so that:
\begin{itemize}
    \item each $b\in B$ qualifies all individuals in~$R$ and disqualifies all individuals in~$B$; and
    \item each $r\in R$ qualifies all individuals in~$R$ and, moreover, for each individual $b\in B$ it holds that~$r$ qualifies~$b$ if and only if~$r$ and~$b$ are adjacent in~$G$.
\end{itemize}
The instance of {\prob{GCAI}} for $f^{\text{CSR}}$ is $(N, \varphi, S, T, \kappa)$. The reduction can be done in polynomial time. We show the correctness of the reduction as follows.

$(\Rightarrow)$ Assume that there is a subset $R'\subseteq R$ of at most~$\kappa$ vertices dominating~$B$. According to the definition of~$\varphi$, individuals in~$R$ are qualified by all individuals in~$N$. Therefore, it holds that $R'\subseteq f^{\text{CSR}}(\varphi, B\cup R')$. Moreover, as~$R'$ dominates~$B$, for every~$b\in B$ there is at least one $r\in R$ which dominates~$b$. By the definition of~$\varphi$, $r$ qualifies~$b$, implying that~$b\in f^{\text{CSR}}(\varphi, B\cup R')$. As this holds for all $b\in B$, we know that the instance of {\prob{GCAI}} for $f^{\text{CSR}}$ constructed above is a {\yesins}.

$(\Leftarrow)$ Assume that there is a subset $R'\subseteq R$ (recall that $N\setminus T=R$) of cardinality at most~$\kappa$ so that $B\subseteq f^{\text{CSR}}(\varphi, B\cup R')$. Let~$b$ be any arbitrary individual in~$B$. According to the definition of~$\varphi$,~$b$ is qualified only by individuals in~$R$ who dominate~$b$ in~$G$. As $b\in f^{\text{CSR}}(\varphi, B\cup R')$, this implies that~$R'$ contains at least one vertex dominating~$b$. As this holds for all $b\in B$, we conclude that~$R'$ dominates~$B$. Given $\abs{R'}\leq \kappa$, we conclude that the {\prob{RBDS}} instance is a {\yesins}.
\end{proof}

\subsection{The Consecutive Domains}
Now we explore the complexity of {\prob{GCAI}} for the two procedural rules restricted to the consecutive domains.

\begin{lemma}
\label{lem-property-csr}
Let~$\varphi$ be a profile over~$N$ which is~{\memph{QC}} with respect to a linear order~$\rhd$ of~$N$. Then, all individuals in $f^{\memph{CSR}}(\varphi, N)$ are consecutive in the order~$\rhd$.
\end{lemma}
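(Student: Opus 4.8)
## Proof Plan for Lemma~\ref{lem-property-csr}

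The plan is to show that the socially qualified set under $f^{\text{CSR}}$ forms an interval in the order $\rhd$ by tracking the iterative construction $K_0^{\text{C}} \subseteq K_1^{\text{C}} \subseteq \cdots$ and arguing that each set $K_\ell^{\text{C}}(\varphi, N)$ is an interval. First I would handle the base case: $K_0^{\text{C}}(\varphi, N)$ consists of all individuals qualified by everyone. Since each $\varphi_{\rhd}(a)$ is QC, the set of individuals qualified by a fixed $a$ is an interval in $\rhd$; hence $K_0^{\text{C}}(\varphi, N)$, being the intersection over all $a \in N$ of these intervals, is itself an interval (an intersection of intervals on a line is an interval). If this intersection is empty the lemma is vacuous since $f^{\text{CSR}}(\varphi, N) = \emptyset$, so assume it is a nonempty interval.

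For the inductive step, suppose $K_{\ell-1}^{\text{C}}(\varphi, N)$ is a nonempty interval, say $\{a_i, a_{i+1}, \dots, a_j\}$. I need to show $K_\ell^{\text{C}}(\varphi, N)$ is also an interval. By definition this set is $K_{\ell-1}^{\text{C}}(\varphi, N)$ together with all individuals qualified by at least one member of $K_{\ell-1}^{\text{C}}(\varphi, N)$. The key observation is that each member $a_t$ with $i \le t \le j$ qualifies an interval of individuals, and crucially, since $a_t$ is itself QC and $a_t$ is qualified by every individual in $K_0^{\text{C}}$ — wait, more directly: every $a_t \in K_{\ell-1}^{\text{C}}$ is reachable from $K_0^{\text{C}}$, but the cleanest route is to argue that the union $\bigcup_{t=i}^{j} Q(a_t)$, where $Q(a_t)$ is the qualifying interval of $a_t$, together with $\{a_i, \dots, a_j\}$, is an interval. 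Each $Q(a_t)$ is an interval; a union of intervals is an interval provided the intervals pairwise overlap or the gaps between them are covered. The point is that $a_t \in K_{\ell-1}^{\text{C}} = \{a_i,\dots,a_j\}$ for all relevant $t$, so we need $\{a_i,\dots,a_j\} \cup \bigcup_t Q(a_t)$ connected; I would argue this by showing that each $Q(a_t)$ either intersects $\{a_i,\dots,a_j\}$ or — if it does not — then it lies entirely to one side, but then I must rule out a gap. Here I would use that $a_t$ qualifying some individual $a_m$ means $a_m \in Q(a_t)$, and since $Q(a_t)$ is an interval containing $a_m$, walking from $a_m$ back toward $a_t$'s position stays in $Q(a_t)$ — actually the decisive fact is that $a_t$'s own position need not be in $Q(a_t)$, so I should instead use: the union of $\{a_t\}$-indexed sets sweeps an interval because $K_{\ell-1}^{\text{C}}$ is an interval and adjacency via QC-rows cannot "skip" a position without the skipped position lying in some $Q(a_t)$ as well, by the consecutiveness. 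I would make this rigorous by taking the leftmost and rightmost elements of $K_\ell^{\text{C}}(\varphi, N)$, call them $a_p$ and $a_q$, and showing every $a_m$ with $p \le m \le q$ belongs to $K_\ell^{\text{C}}(\varphi, N)$: if $a_m \in K_{\ell-1}^{\text{C}}$ we are done; otherwise $a_m$ is left of $a_i$ or right of $a_j$; say $a_m$ is left of $a_i$ with $p \le m < i$, then since $a_p \in K_\ell^{\text{C}} \setminus K_{\ell-1}^{\text{C}}$ there is $a_s \in \{a_i,\dots,a_j\}$ with $a_p \in Q(a_s)$, and $a_i \in Q(a_s)$ too? not necessarily — but $a_i$ might not be qualified by $a_s$; instead I observe $Q(a_s)$ is an interval containing $a_p$, and it must also contain some element $\ge a_i$ or the bridge fails; I will close this gap by choosing $a_s$ to be the member of $K_{\ell-1}^{\text{C}}$ whose qualifying interval reaches furthest left, and noting its interval, being contiguous and reaching to $a_p$, covers all of $a_p, a_{p+1}, \dots$ up to at least its right endpoint, which either meets $\{a_i, \dots, a_j\}$ or we contradict maximality.

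The main obstacle I anticipate is precisely this gap-filling argument: showing that the "new" individuals added in one iteration, lying possibly on both sides of the current interval, together with the current interval form a single contiguous block rather than a block with holes. The QC property of the qualifying rows of the individuals in $K_{\ell-1}^{\text{C}}$ is exactly what prevents holes — a hole at position $m$ would mean $a_m$ is qualified by no member of $K_{\ell-1}^{\text{C}}$, yet $a_{m-1}$ and $a_{m+1}$ (or positions further out) are, forcing some member's qualifying interval to jump over $a_m$, contradicting consecutiveness of that member's row. I would phrase the induction carefully so that this local non-jumping property does all the work, then conclude by the monotone convergence $f^{\text{CSR}}(\varphi, N) = K_\ell^{\text{C}}(\varphi, N)$ for large $\ell$, which is therefore an interval in $\rhd$.
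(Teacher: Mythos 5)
Your overall plan (induction on the iterations $K_0^{\text{C}} \subseteq K_1^{\text{C}} \subseteq \cdots$, showing each stage is an interval) can be made to work, but the inductive step as you sketch it has a genuine gap, and the specific contradiction you propose to close it is not valid. You claim that a hole at position $m$ in $K_\ell^{\text{C}}(\varphi,N)$ would force \emph{some single member's} qualifying interval to jump over $a_m$, contradicting consecutiveness of that member's row. That does not follow: $K_\ell^{\text{C}}(\varphi,N)$ is a union of several intervals (the old interval $K_{\ell-1}^{\text{C}}$ and the various $Q(a_t)$), and a union of intervals can have a hole simply because two of those intervals are disjoint and lie on opposite sides of $a_m$ --- no individual row has to skip anything. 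Concretely, nothing in your argument rules out some $a_s\in K_{\ell-1}^{\text{C}}$ whose qualifying interval $Q(a_s)$ lies entirely to the left of, and detached from, $K_{\ell-1}^{\text{C}}$; QC of each row alone does not prevent this. Your fallback (``choose the member whose interval reaches furthest left \dots which either meets $\{a_i,\dots,a_j\}$ or we contradict maximality'') is not substantiated: there is no maximality statement available to contradict.

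The missing idea --- which is exactly what the paper's much shorter proof rests on --- is that when $f^{\text{CSR}}(\varphi,N)\neq\emptyset$ the set $K_0^{\text{C}}(\varphi,N)$ is nonempty, and by its very definition \emph{every} individual $a'\in N$ qualifies \emph{every} member of $K_0^{\text{C}}(\varphi,N)$; hence every qualifying interval $Q(a')$ contains a fixed common individual $a\in K_0^{\text{C}}(\varphi,N)$. All the intervals you are unioning (including $K_{\ell-1}^{\text{C}}$ itself, which contains $K_0^{\text{C}}(\varphi,N)$) therefore share the point $a$, and a union of intervals with a common point is an interval --- no gap-filling case analysis is needed. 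With this observation your induction collapses to essentially the paper's one-line argument: $f^{\text{CSR}}(\varphi,N)$ is a union of qualifying intervals all containing $a$, hence consecutive. As written, however, your proof never invokes the fact that the members of $K_0^{\text{C}}(\varphi,N)$ are qualified by everyone, and without it the inductive step fails.
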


\begin{proof}
If $f^{\text{CSR}}(\varphi, N)=\emptyset$, the lemma vacuously holds. Otherwise, there is an individual $a\in N$ who is qualified by all individuals. As~$\varphi$ is~QC with respect to~$\rhd$, all individuals only qualify individuals consecutive in~$\rhd$, and they all qualify~$a$. It follows that all individuals in~$f^{\text{CSR}}(\varphi, N)$ are consecutive in~$\rhd$.
\end{proof}

Based on Lemma~\ref{lem-property-csr}, we can derive a polynomial-time algorithm for {\prob{GCAI}} for~$f^{\text{CSR}}$.

\begin{theorem}
\label{poly-gcai-csr-qc}
{\prob{GCAI}} for $f^{\memph{CSR}}$ is polynomial-time solvable when restricted to~{\memph{QC}} profiles.
\end{theorem}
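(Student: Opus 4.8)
The plan is to exploit the structure guaranteed by Lemma~\ref{lem-property-csr}: in any final profile the set of socially qualified individuals forms a contiguous block in the fixed order $\rhd$. We may assume the profile $\varphi$ is given together with a linear order $\rhd=(a_1,\dots,a_n)$ witnessing the QC property (computable in polynomial time). Since $S\subseteq T$ and $U$ may only be drawn from $N\setminus T$, the set $T\cup U$ is always a subset of $N$, and crucially adding individuals does not change the qualifications anyone holds; it only enlarges the ground set. The key consequence is that the ``reachability'' structure we care about lives inside the incidence graph $G_\varphi$ on all of $N$, restricted to whichever vertices we decide to include.

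The main step is to guess the block. First I would observe that, by Lemma~\ref{lem-csr-property}, once we fix some individual $a^*$ that is to be qualified by everyone in the final profile, the socially qualified individuals in $T\cup U$ are exactly $a^*$ together with the vertices reachable from $a^*$ in $G_\varphi[T\cup U]$. By Lemma~\ref{lem-property-csr} applied to $\varphi$ restricted to $T\cup U$, this reachable set is an interval of $\rhd$ containing $a^*$. So I would iterate over all $O(n^2)$ choices of an interval $[a_i,a_j]$ of $\rhd$ and ask: can we choose $U\subseteq N\setminus T$ with $|U|\le k$ so that (a) $S\subseteq\{a_i,\dots,a_j\}$, (b) every individual of $T\cup U$ lying in the qualified interval $[a_i,a_j]$ is actually $f^{\mathrm{CSR}}$-socially qualified, and (c) no individual of $T$ forces the qualified block to be strictly larger than $[a_i,a_j]$ (i.e.\ individuals of $T$ outside the interval must not end up socially qualified — but since $S\subseteq[a_i,a_j]$ and we only need $S$ qualified, this is automatically fine; extra qualified individuals never hurt). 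Thus condition (c) is vacuous, and the problem reduces to: find a minimum-size $U\subseteq N\setminus T$ making all of $S$ reachable from some common ``consensus'' vertex inside the interval.

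For a fixed interval, I would further guess the consensus individual $a^*\in[a_i,a_j]$ (another factor of $n$). A vertex can serve as consensus vertex only if it is qualified by everybody we include; equivalently we must exclude from $N\setminus T$ every individual who disqualifies $a^*$, and we must bail out if some individual of $T$ disqualifies $a^*$. Because $\varphi$ is QC, each individual's qualified set is an interval, so ``$a^*$ is qualified by $a$'' is just membership of $a^*$ in $a$'s interval — easy to test. With $a^*$ fixed, the residual task is a shortest-connection problem: include a cheapest set of weight-$1$ vertices (those in $N\setminus T$) so that every $s\in S$ is reachable from $a^*$ in the induced subgraph, where vertices of $T$ are free. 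This is exactly a small instance of {\prob{DVWST}} with root $a^*$, terminals $S$, and it is solvable in $\bigos{2^{|S|}}$ by Theorem~\ref{thm-dvwst-fpt} — but for a \emph{polynomial} bound I instead want to use the interval structure: the only vertices that can ever lie on a path contributing to the qualified block are themselves in $[a_i,a_j]$, so the relevant graph is confined to that interval, and within an interval the reachability-from-$a^*$ relation is itself ``convex'', which lets a simple left-to-right dynamic program over positions $i,\dots,j$ compute, for each prefix, the minimum number of added individuals needed to make that prefix reachable. I would run this DP, and accept iff for some interval and some $a^*$ the optimum is at most $k$.

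The step I expect to be the genuine obstacle is establishing that the reachability structure inside an interval is well-behaved enough to be captured by a polynomial dynamic program — in particular, arguing that when we commit to an interval $[a_i,a_j]$ as the qualified block, an optimal solution never needs to ``route'' through individuals outside the interval, and that the greedy/DP choice of which $N\setminus T$ individuals to add is optimal. The cleanest way around this is probably to show that if some path from $a^*$ to $s\in S$ uses a vertex outside $[a_i,a_j]$, that vertex would itself become socially qualified, contradicting that the qualified set equals $[a_i,a_j]$ — hence all useful paths stay inside the interval. Once that is pinned down, the remaining bookkeeping (handling Reduction Rule~\ref{reduction-rule}, the case $f^{\mathrm{CSR}}(\varphi,T)=\emptyset$, and verifying that extra qualified individuals outside $S$ are harmless) is routine, and the overall running time is polynomial: $O(n^2)$ intervals times $O(n)$ consensus guesses times a polynomial-time DP.
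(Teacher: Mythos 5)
There is a genuine gap. Your proposal correctly invokes Lemma~\ref{lem-property-csr} to conclude that the set of socially qualified individuals is an interval of $\rhd$, but you then head toward an enumeration of candidate intervals plus a ``left-to-right dynamic program'' whose existence and correctness you yourself flag as the unresolved obstacle. That DP is never constructed: the claim that reachability from $a^*$ inside an interval is ``convex'' is not established, and it is not obviously true --- an individual $a_x$ in the interval may be qualified only by individuals to its right, or only via added individuals from $N\setminus T$, so a prefix-by-prefix DP over positions $i,\dots,j$ has no clear well-founded order, and the interaction between which members of $N\setminus T$ are added and which paths exist is exactly the Steiner-type difficulty you were trying to avoid. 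As written, the proposal does not yield a polynomial-time algorithm.

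The decisive consequence of Lemma~\ref{lem-property-csr} that you missed is much simpler: since all of $S$ lies (in the order $\rhd$) between the left-most member $a_i$ and the right-most member $a_j$ of $S$, and the qualified set in the final profile is an interval of $\rhd$ restricted to $T\cup U$, making $a_i$ and $a_j$ socially qualified automatically makes every member of $S$ socially qualified. Hence one may move all of $S$ except $a_i$ and $a_j$ into $T\setminus S$, reducing the number of distinguished individuals to $\ell\le 2$, and then run the $\bigos{2^{\ell}}$ algorithm of Theorem~\ref{thm-csr-fpt}, which is now polynomial. This is the paper's entire proof; no interval enumeration, no consensus-vertex guessing beyond what Theorem~\ref{thm-csr-fpt} already does internally, and no new dynamic program are needed.
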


\begin{proof}
Let $I=\left( N,\varphi ,S,T,k\right) $ be an instance of {\prob{GCAI}} for~$f^{\text{CSR}}$ where~$\varphi$ is~QC with respect to a linear order~$\rhd$ over~$N$.
Let~$a_i$ and~$a_j$ be respectively the left-most and the right-most individuals in~$\rhd$ that are from~$S$.
Due to Lemma~\ref{lem-property-csr}, the question of~$I$ is equivalent to making~$a_i$ and~$a_j$ socially qualified in~$T$ by adding at most~$k$ individuals from $N\setminus T$ into~$T$. By light of this fact, we move all individuals except~$a_i$ and~$a_j$ from~$S$ into~$T\setminus S$. After this operation,~$S$ contains at most two individuals ($S$ is a singleton when $i=j$). Then, we solve the instance in polynomial time by Theorem~\ref{thm-csr-fpt}.
\end{proof}

Now we move on to the liberal-start-respecting rule. Unlike~$f^{\text{CSR}}$, we show that {\prob{GCAI}} for~$f^{\text{LSR}}$ remains computationally hard even when restricted to~QC profiles.

\begin{theorem}
\label{thm-gcai-lsr}
{\prob{GCAI}} for $f^{\memph{LSR}}$ is {\memph\nph} and is {\memph\wbh} with respect to the number of added individuals even when restricted to~{\memph{QC}} profiles.
\end{theorem}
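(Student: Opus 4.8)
The plan is to prove Theorem~\ref{thm-gcai-lsr} by a reduction from {\prob{RBDS}}, parameterized by the solution size~$\kappa$, mimicking the structure of the proof of Theorem~\ref{thm-gcai-csr-wbh} but engineering the profile so that it is~QC. Starting from an {\prob{RBDS}} instance $(G,\kappa)$ with $G=(R\cup B,E)$, I would introduce one individual per vertex of~$R$ and per vertex of~$B$, set $S=T=B$ (so the addable individuals are exactly those corresponding to~$R$), and use the budget $k=\kappa$. The key design choice is how each individual qualifies others. Under $f^{\text{LSR}}$ the initial socially qualified set consists of self-qualifiers, so I would make every individual in~$R$ qualify herself, while no individual in~$B$ qualifies herself; then each $r\in R$ additionally qualifies exactly the individuals $b\in B$ with $\{r,b\}\in E(G)$, and no other $b'\in B$. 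Individuals in~$B$ need to qualify nobody in~$B$ (again to keep the ``someone must bring them in from~$R$'' logic) but may need extra qualifications only to satisfy the~QC constraint, which is where the care goes.

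The correctness argument is then essentially the one for Theorem~\ref{thm-gcai-csr-wbh}. For the forward direction, a dominating set $R'\subseteq R$ of size at most~$\kappa$ added to~$T=B$ makes every $r\in R'$ socially qualified (self-qualifier), and every $b\in B$ is qualified by some $r\in R'$ that dominates it, hence $B\subseteq f^{\text{LSR}}(\varphi, B\cup R')$. For the converse, if $U\subseteq N\setminus T = R$ with $|U|\le\kappa$ satisfies $B\subseteq f^{\text{LSR}}(\varphi, B\cup U)$, then since no $b\in B$ is a self-qualifier and no $b\in B$ qualifies any $b'\in B$, the only way a given $b$ enters the socially qualified set is to be qualified by some added $r\in U$ with $\{r,b\}\in E(G)$; hence $U$ dominates~$B$, so $(G,\kappa)$ is a yes-instance. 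Since $k=\kappa$ and the reduction is a polynomial-time parameterized reduction, this yields both {\nphns} and {\wbhns} with respect to the number of added individuals.

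The main obstacle — and the only place where this differs substantively from the CSR proof — is ensuring the constructed profile~$\varphi$ is~QC, i.e.\ exhibiting a linear order~$\rhd$ over $N=R\cup B$ in which every row $\varphi_{\rhd}(a)$ has its $1$s consecutive. The naive profile above need not be~QC, because the neighborhoods of different $r\in R$ inside~$B$ can be arbitrary subsets, and arbitrary $0$-$1$ matrices are not~C1P. The fix is to pad: I would have each $r\in R$ qualify \emph{all} of~$R$ in addition to its real neighbors in~$B$ (this does not affect correctness, since members of~$R$ that are not added are irrelevant and adding extra qualifications among added individuals only helps make them socially qualified, which they already are via self-qualification), and then place $R$ as a consecutive block at one end of~$\rhd$. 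The remaining issue is the neighborhoods within~$B$: here I would have each $r\in R$ qualify, instead of merely $N_G(r)\cap B$, a consecutive ``interval'' of~$B$ that contains $N_G(r)\cap B$ — but that would spuriously dominate extra vertices of~$B$ and break the converse direction. So the genuinely delicate step is to build the instance on a \emph{different}, larger structure: I expect the cleanest route is to reduce not from general {\prob{RBDS}} but from {\prob{RBDS}} restricted to bipartite graphs whose bipartite adjacency matrix already has the consecutive ones property (which, when $R$ is the ``row'' side, is exactly the condition needed), or to first massage $G$ by a standard gadget (e.g.\ replace each $b\in B$ by several copies, or subdivide edges, or add a ``guard'' row) so that the resulting matrix is~C1P while preserving the domination structure and the parameter up to a constant factor — and to verify that {\prob{RBDS}} stays {\wbh} under that restriction. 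Laying out this gadget and checking both the~QC certificate order~$\rhd$ and that the parameter is preserved is where the real work lies; the rest is bookkeeping analogous to Theorem~\ref{thm-gcai-csr-wbh}.
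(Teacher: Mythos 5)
Your reduction skeleton (from {\prob{RBDS}} with $S=T=B$, self-qualifying addable individuals on the $R$ side, budget $k=\kappa$) is the right starting point, and you correctly identify that the entire difficulty is making the profile~QC. But you do not close that gap, and the first of the two fixes you propose is a dead end: {\prob{RBDS}} restricted to bipartite graphs whose adjacency matrix (rows indexed by~$R$) has the consecutive ones property is polynomial-time solvable, because each $r\in R$ then dominates an interval of~$B$ under a fixed order of~$B$, so the problem becomes covering points on a line by at most~$\kappa$ intervals, which a left-to-right greedy algorithm solves. There is consequently no {\wbhns} to inherit from that restriction, and the step ``verify that {\prob{RBDS}} stays {\wbh} under C1P'' cannot succeed.

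Your second suggestion (an edge-subdivision-style gadget) points in the right direction, but it is precisely the part you leave unspecified, and it is where the actual proof lives. The paper replaces each $r\in R$ of degree $d(r)$ by a star of $d(r)+1$ individuals $r(0),r(1),\dots,r(d(r))$: only $r(0)$ is addable (the set $T$ is $B$ together with all the $r(i)$ for $i\geq 1$), $r(0)$ qualifies itself and $r(1),\dots,r(d(r))$, each $r(i)$ with $i\geq 1$ qualifies exactly one neighbor of~$r$ in~$B$ (every neighbor being hit by exactly one such $r(i)$), and all other entries of~$\varphi$ are~$0$. Every row of this profile then has at most one~$1$ except the rows of the $r(0)$'s, whose~$1$s sit exactly on the block of individuals created for~$r$; placing each such block consecutively yields a QC order \emph{regardless} of the structure of~$G$, which is what sidesteps the C1P obstruction you ran into. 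Correctness goes through as in your sketch because the only individual qualifying $r(i)$ is the self-qualifier $r(0)$, so the only way any $b\in B$ becomes socially qualified is that some $r(0)$ with $r$ dominating $b$ is added; a budget-$\kappa$ solution therefore corresponds exactly to a dominating set of size at most~$\kappa$. Without this (or an equivalent) gadget your argument is incomplete: you have reduced the theorem to a subproblem you have not solved, and the one concrete alternative you name provably fails.
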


\begin{proof}
We prove the theorem by giving a reduction from the {\prob{RBDS}} problem to {\prob{GCAI}} for~$f^{\text{LSR}}$ restricted to~QC profiles.
Let $(G, \kappa)$ be an instance of {\prob{RBDS}}, where $G=(B\cup R, E)$ is a bipartite graph, and~$\kappa$ is an integer.
For each $b\in B$, we construct an individual denoted still by~$b$ for notational simplicity. For each $r\in R$, let~$d(r)$ be the degree of~$r$ in~$G$. For each $r\in R$, we construct $d(r)+1$ individuals~$r(0),~r(1), \dots, r(d(r))$. Let $C(r)=\{r(i) \setmid i\in [d(r)]\}$ for each $r\in R$, and let $C(R)=\bigcup_{r\in R}C(r)$. In addition, let~$N$ denote the set of the above constructed $\abs{B}+\abs{R}+\sum_{r\in R}d(r)$ individuals, let $S=B$, and let $T=B\cup C(R)$. We define a profile~$\varphi$ over~$N$ as follows.
\begin{itemize}
    \item For each red vertex $r\in R$, the individual~$r(0)$ qualifies~$r(0)$,~$r(1)$, $\dots$, $r(d(r))$, and each~$r(1)$, $r(2)$, $\dots$, $r(d(r))$ qualifies exactly one neighbor of~$r$ in~$G$ so that every neighbor of~$r$ is qualified by exactly one of these~$d(r)$ individuals.
    \item For each $a, a'\in N$ where $\varphi(a, a')$ is not specified above, we define $\varphi(a, a')=0$. 
\end{itemize}
The instance of {\prob{GCAI}} is $( N,\varphi, S, T, \kappa)$.

It is easy to see that the profile~$\varphi$ is~QC. In fact, except those in $\{r(0) \mid r\in R\}$, all the other individuals qualify at most one individual. Moreover, as every~$r(0)$ where $r\in R$ qualifies exactly the $d(r)+1$ individuals created for~$r$, the profile is~QC with respect to any linear order over~$N$ where for every $r\in R$ the $d+1$ individuals created for~$r$ are consecutive.

The construction takes polynomial time. In the following, we show the correctness of the reduction.

$(\Rightarrow)$ Assume that there is a subset $R'\subseteq R$ of at most~$\kappa$ vertices dominating~$B$. Let $U=\{r(0) \setmid r\in R'\}$. We show that $S\subseteq f^{\text{LSR}}(\varphi, T\cup U)$. Note that as $\varphi(r(0), r(0))=1$ for all $r\in R$, and~$r(0)$ qualifies also all the other individuals created for~$r$, we know that for every $r\in R'$, the individuals~$r(0)$,~$r(1)$,~$\dots$,~$r(d(r))$ are all $f^{\text{LSR}}$ socially qualified in $T\cup U$ at~$\varphi$. Let~$b$ be an individual in~$S$. As~$R'$ dominates~$B$ and $B=S$,~$b$ has at least one neighbor $r\in R'$ in~$G$. Then, due to the above construction, there exists an individual~$r(i)$ where $i\in [d(r)]$ who qualifies~$b$. As~$r(i)\in f^{\text{LSR}}(\varphi, T\cup U)$, it follows that~$b\in f^{\text{LSR}}(\varphi, T\cup U)$. As this holds for all $b\in S$, the above constructed instance of~{\prob{GCAI}} for~$f^{\text{LSR}}$ is a {\yesins}.

$(\Leftarrow)$ Assume that there is a $U\subseteq N\setminus T$ such that $\abs{U}\leq \kappa$ and~$S\subseteq f^{\text{LSR}}(\varphi, T\cup U)$. Let $R'=\{r\in R \setmid r(0)\in U\}$. Clearly,~$\abs{R'}= \abs{U}\leq \kappa$. We claim that~$R'$ dominates~$B$. Let~$b$ be a vertex (individual) in~$B$. By the definition of~$\varphi$,~$b$ is only qualified by individuals in $C(r)$ such that $r\in R$ and~$r$ dominates~$b$. Then, as~$b\in f^{\text{LSR}}(\varphi, T\cup U)$, there exist $r\in R$ and $i\in [d(r)]$ such that $r(i)\in f^{\text{LSR}}(\varphi, T\cup U)$ and $\varphi(r(i),b)=1$. Note that the only individual who qualifies~$r(i)$ is the individual~$r(0)$ who qualifies herself. This means that $r(0)\in U$, and hence $r\in R'$, further implying that~$b$ is dominated by~$r$. As the above argument holds for all $b\in B$, we conclude that~$R'$ dominates~$B$.
\end{proof}

When restricted to~DQC, we can show that {\prob{GCAI}} for both procedural rules are polynomial-time solvable. A crucial observation is that if the given instance is a {\yesins}, we need at most two individuals to bring all distinguished individuals into the set of socially qualified individuals.

\begin{theorem}
\label{thm-gcai-lsr-dqc}
{\prob{GCAI}} for $f^{\memph{CSR}}$  and {\prob{GCAI}} for $f^{\memph{LSR}}$ restricted to~{\memph{DQC}} profiles are polynomial-time solvable.
\end{theorem}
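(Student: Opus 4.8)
The approach rests on the structure of \memph{DQC} profiles. Fix a linear order $\rhd=(a_1,\dots,a_n)$ with respect to which $\varphi$ is \memph{DQC}; such an order can be found in polynomial time. Then for every individual $a$ the set of individuals disqualified by $a$ is a consecutive block $B_a$ of $\rhd$ (possibly empty), so $a$ qualifies exactly a prefix together with a suffix of $\rhd$. The algorithm consists of proving the following ``small certificate'' claim and then brute-forcing over it: if $(N,\varphi,S,T,k)$ is a \yesins, then it has a solution $U$ with $\abs{U}\le 2$, hence a solution of size at most $\min(k,2)$. Granting this, we enumerate all $U'\subseteq N\setminus T$ with $\abs{U'}\le\min(k,2)$ --- there are only $\bigo{n^{2}}$ of them --- and for each one compute $f^{\csr}(\varphi,T\cup U')$ (resp.\ $f^{\lsr}(\varphi,T\cup U')$) directly from the iterative definition in polynomial time, accepting iff some $U'$ makes all of $S$ socially qualified. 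This clearly runs in polynomial time and handles both rules uniformly.

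To prove the small-certificate claim, fix any solution $U$ and let $Q=f(\varphi,T\cup U)\supseteq S$. Using the reachability description from (the analogue of) Lemma~\ref{lem-csr-property}, one first shows that the individuals of $T\cup U$ that are \emph{not} socially qualified are exactly those lying in $\bigcap_{c\in Q}B_c$: every $c\in Q$ is socially qualified and qualifies everything outside $B_c$, so $(T\cup U)\setminus B_c\subseteq Q$; conversely an individual inside every block $B_c$ with $c\in Q$ can never be reached. Since each $B_c$ is an interval of $\rhd$, this intersection is again an interval, so $S\subseteq Q$ is equivalent to $S\cap\bigcap_{c\in Q}B_c=\emptyset$. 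Now invoke the elementary fact that the intersection of a family of intervals equals the intersection of just two of its members --- the one with the rightmost left endpoint and the one with the leftmost right endpoint. Hence there are $c_1,c_2\in Q$ (or a single $c_1$, if already $B_{c_1}\cap S=\emptyset$) with $B_{c_1}\cap B_{c_2}\cap S=\emptyset$, i.e.\ $c_1$ and $c_2$ jointly qualify all of $S$. It then remains to argue that $c_1$ and $c_2$ can be made socially qualified by adding at most two individuals from $N\setminus T$: at most one added individual is needed to provide a member of the initial set ($K_0^{\text{C}}$, an individual qualified by everyone, for $f^{\csr}$; $K_0^{\text{L}}$, a self-qualifier, for $f^{\lsr}$), and, exploiting that the incidence graph of a \memph{DQC} profile is extremely dense --- every vertex misses only a consecutive block of out-arcs --- the remaining cost of reaching $c_1$ and $c_2$ while still covering $S$ is at most one further individual. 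Reduction Rules~\ref{reduction-rule} and~\ref{reduction-rule-lsr} may be applied first so that no individual in $S$ qualifies another (nor itself, in the \lsr{} case), which simplifies the bookkeeping.

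The step I expect to be the main obstacle is exactly this last one: turning ``there are two socially qualified witnesses covering $S$'' into ``two added individuals suffice''. An arbitrary solution $U$ may be large, and the witnesses $c_1,c_2$ may themselves lie in $U$ and owe their social qualification to long propagation chains through $U$; the argument must show that those chains can be shortcut and that at most one auxiliary individual is ever needed to seed the process. The interval-complement shape of the socially qualified set (forced by \memph{DQC} density) is what makes this possible, but the case analysis --- whether a usable seed already exists in $T$, whether $c_1,c_2$ lie in $T$ or must be added, and how the seed's own disqualified block interacts with $S$ and with the leftmost and rightmost members of $S$ --- is where the care lies. The $f^{\csr}$ and $f^{\lsr}$ cases run in parallel, differing only in the description of the initial socially qualified set.
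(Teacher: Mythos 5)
Your central ``small certificate'' claim --- that every \yesins{} of {\prob{GCAI}} on a DQC profile admits a solution $U$ with $\abs{U}\le 2$ --- is false, and since your whole algorithm is a brute force over sets of size at most $\min(k,2)$, the proposal does not survive this. Here is a counterexample for $f^{\text{LSR}}$. Take $N=\{a_1,\dots,a_8\}$ in the order $\rhd=(a_1,\dots,a_8)$ and let each $a_i$ disqualify exactly the interval $B_{a_i}$, where $B_{a_1}=\{a_3,\dots,a_8\}$, $B_{a_2}=\{a_1,\dots,a_6\}$, $B_{a_3}=\{a_1,\dots,a_4\}$, $B_{a_4}=\{a_1,\dots,a_5\}$, $B_{a_5}=B_{a_6}=N$, $B_{a_7}=\{a_5,\dots,a_8\}$, $B_{a_8}=\{a_4,\dots,a_8\}$; set $S=T=\{a_5\}$ and $k=4$. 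This profile is DQC. The only self-qualifier is $a_1$, the only individual qualifying $a_5$ is $a_3$, $a_3$ is qualified only by $a_7$ and $a_8$, these are qualified only by $a_2,a_3,a_4$, and $a_2$ is qualified only by $a_1,a_7,a_8$. Hence every solution must contain $\{a_1,a_2,a_3\}$ and one of $a_7,a_8$, so the minimum solution size is $4$; with $k=4$ the instance is a \yesins, yet your enumeration of sets of size at most $2$ rejects it. The intuition in your last paragraph fails precisely where you feared: in a DQC profile a non-self-qualifier lying in the already-qualified \emph{prefix} can only extend the qualified \emph{suffix} and vice versa, so the propagation can alternate between the two ends of $\rhd$, advancing one position per added individual, and chains of length $\Theta(n)$ can be genuinely necessary.

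The salvageable part of your argument is the interval-intersection step: there do exist at most two individuals $c_1,c_2$ that jointly \emph{qualify} all of $S$ (this is the paper's Observation on the extremal sets $A$ and $B$). But the correct conclusion is that one may replace the distinguished set $S$ by the two-element set $\{c_1,c_2\}$ while keeping the \emph{full} budget $k$, not that the budget itself can be capped at $2$. That is exactly what the paper does: it enumerates all $S'\subseteq N$ with $\abs{S'}\le 2$ and $S\subseteq{\bf 1}_{\varphi}(S')$, and solves each resulting instance $(N,\varphi,S',T\cup S',k-\abs{S'\cap(N\setminus T)})$ with the {\fpt} algorithms of Theorems~\ref{thm-csr-fpt} and~\ref{thm-gcai-lsr-fpt}, which run in polynomial time once the number of distinguished individuals is a constant. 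Your direct computation of $f^{\csr}$ and $f^{\lsr}$ on candidate sets is fine as far as it goes, but without reusing the FPT machinery (or some substitute for finding a cheap ``activation chain'' to $c_1$ and $c_2$) the approach cannot be completed.
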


\begin{proof}
Let $I=(N, \varphi, S, T, k)$ be an instance of {\prob{GCAI}} for~$f^{\text{CSR}}$ (resp.~$f^{\text{LSR}}$), where~$\varphi$ is~DQC with respect to a linear order $\rhd=(a_{1},a_{2},\ldots ,a_{n})$ over~$N$. If~$k< 2$, we solve the instance in polynomial time by a brute-force search. So, in the following, let us assume that $k\geq 2$.
By $a\unrhd a'$, we mean either $a\rhd a'$ or $a=a'$. Let~$N'=\{a\in N \setmid \exists(a'\in N)[\varphi(a, a')=0]\}$ be the set of individuals in~$N$ disqualifying at least one individual in~$N$. For each $a\in N'$, let~$L(a)$ be the left-most individual~$a$ disqualifies, and let~$R(a)$ be the right-most individual~$a$ disqualifies in~$\rhd$. More precisely, $L(a)=a_j$ (resp.\ $R(a)=a_j$) such that $\varphi(a, a_j)=0$ and, moreover, for all $a_i\in N$ such that $\varphi(a, a_i)=0$ it holds that $i\geq j$ (resp. $i\leq j$). Let
\[A=\{a\in N' \mid \varphi(a, a_n)=1, \forall{(a'\in N, \varphi(a', a_n)=1)}[R(a) \unrhd R(a')]\}\]
and let
\[B=\{a\in N' \mid \varphi(a, a_0)=1, \forall{(a'\in N, \varphi(a', a_1)=1)}[L(a') \unrhd L(a)]\}.\]

For every subset $Z\subseteq N$, let
\[{\bf{1}}_{\varphi}(Z)=\{a\in N\setmid \exists(a'\in Z)[\varphi(a', a)=1]\}\]
denote the set of individuals qualified by at least one individual from~$Z$.

\begin{observation}
\label{obs-1}
Let~$X$ be a subset of $A\cup B$ of cardinality at most two so that for each $Y\in \{A, B\}$ it holds that $\abs{X\cap Y}=1$ whenever $Y\neq\emptyset$. Then, if $N\setminus N'=\emptyset$, for every~$Z\subseteq N$, it holds that~${\bf{1}}_{\varphi}(Z)\subseteq {\bf{1}}_{\varphi}(X)$. Moreover, if $N\setminus N'\neq \emptyset$, for every~$Z\subseteq N$, it holds that~${\bf{1}}_{\varphi}(Z)\subseteq {\bf{1}}_{\varphi}(\{a\})$ for every~$a\in N\setminus N'$.
\end{observation}

In view of Observation~\ref{obs-1}, if~$I$ is a {\yesins}, there exists a subset of~$N$ of at most two individuals so that every individual in~$S$ is qualified by at least one individual in the subset. Therefore, to solve~$I$, we enumerate all subsets~$S'\subseteq N$ of at most two individuals so that~$S\subseteq {\bf{1}}_{\varphi}(S')$. For each enumerated subset~$S'$, we solve an instance $I_{S'}=(N, \varphi, S', T', k')$ of {\prob{GCAI}}, where~$T'=T\cup S'$ and $k'=k-\abs{S'\cap (N\setminus T)}$. By Theorems~\ref{thm-csr-fpt} and~\ref{thm-gcai-lsr-fpt}, each~$I_{S'}$ can be done in polynomial time. The original instance~$I$ is a {\yesins} if and only if there exists at least one enumerated~$S'$ so that~$I_{S'}$ is a {\yesins}.
As we have at most $\abs{N}^2$ enumerations, the whole algorithm takes polynomial time.
\end{proof}

\section{Concluding Remarks}
\label{sec_conclusion}

We proved that {\prob{GCAI}} for both the consensus-start-respecting rule ($f^{\text{CSR}}$) and the liberal-start-respecting rule ($f^{\text{LSR}}$) are {\fpt} with respect to the number of distinguished individuals (Theorems~\ref{thm-csr-fpt} and~\ref{thm-gcai-lsr-fpt}), resolving two open questions left in~\cite{DBLP:journals/aamas/YangD18}. Additionally, we showed that {\prob{GCAI}} for~$f^{\text{CSR}}$ and {\prob{GCAI}} for~$f^{\text{LSR}}$ are {\wbh} with respect to the solution size~$k$ (Theorems~\ref{thm-gcai-csr-wbh} and~\ref{thm-gcai-lsr}). Furthermore, we studied {\prob{GCAI}} restricted to the qualifying consecutive~(QC) domain and the disqualifying consecutive~(DQC) domain. We showed that both {\prob{GCAI}} for $f^{\text{CSR}}$ and {\prob{GCAI}} for $f^{\text{LSR}}$ become polynomial-time solvable when restricted to the~DQC domain  (Theorem~\ref{thm-gcai-lsr-dqc}). However, when restricted to the~QC domain, {\prob{GCAI}} for~$f^{\text{CSR}}$ is polynomial-time solvable (Theorem~\ref{poly-gcai-csr-qc}), while {\prob{GCAI}} for~$f^{\text{LSR}}$ turned out to be computationally hard (Theorem~\ref{thm-gcai-lsr}). See Table~\ref{tab-summary} for a summary of these results.

\begin{table}[ht]
    \caption{A summary of our main results regarding the parameterized complexity of group control by adding individuals for the two procedural rules $f^{\text{CSR}}$ and $f^{\text{LSR}}$. Here,~$\ell$ denotes the number of distinguished candidates, and~$k$ denotes the solution size.}\medskip
    \label{tab-summary}
       \centering{
    \begin{tabular}{|l|c|c|c|c|}\hline
         & \multicolumn{2}{c|}{parameters} & \multicolumn{2}{c|}{restricted domains} \\ \cline{2-5}
         & $\ell$ & $k$ & QC & DQC \\ \hline
    $f^{\text{CSR}}$  & {\fpt} (Thm.~\ref{thm-csr-fpt}) & {\wbh} (Thm.~\ref{thm-gcai-csr-wbh}) & {\poly} (Thm.~\ref{poly-gcai-csr-qc}) & {\poly} (Thm.~\ref{thm-gcai-lsr-dqc}) \\ \hline
    $f^{\text{LSR}}$ & {\fpt} (Thm.~\ref{thm-gcai-lsr-fpt}) & {\wbh} (Thm.~\ref{thm-gcai-lsr}) & {\wbh} (Thm.~\ref{thm-gcai-lsr})  & {\poly} (Thm.~\ref{thm-gcai-lsr-dqc}) \\ \hline
    \end{tabular}
    }
\end{table}

Given the fixed-parameter tractability of {\prob{GCAI}} stated in Theorems~\ref{thm-csr-fpt} and~\ref{thm-gcai-lsr-fpt}, one may wonder whether the two problems admit polynomial kernels when parameterized by the number of distinguished individuals.
Regarding this issue, we remark that both reductions in the proofs of Theorems~\ref{thm-gcai-csr-wbh} and \ref{thm-gcai-lsr} are in fact polynomial parameter transformations with respect to the combined parameter $\abs{S}+k$ of the number of distinguished candidates and the number of added individuals. Then, by the lower bound technique developed by Dom, Lokshtanov, and Saurabh~\cite{DBLP:journals/talg/DomLS14}, we have the following two corollaries refuting the possibility of the existence of polynomial kernels for the two problems.

\begin{corollary}
\label{cor-1}
{\prob{GCAI}} for $f^{\memph{CSR}}$ does not admit any polynomial kernel with respect to the parameter $\abs{T}+k$ unless the polynomial hierarchy collapses to the third level $\left({\textsf{PH}}=\Sigma_{{\textsf{P}}}^{\emph{\text{3}}}\right)$.
\end{corollary}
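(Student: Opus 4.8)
The plan is to read the corollary off the hardness reduction of Theorem~\ref{thm-gcai-csr-wbh}, now regarded as a \emph{polynomial parameter transformation}, together with the known incompressibility of {\prob{RBDS}}. By the lower-bound framework of Dom, Lokshtanov, and Saurabh~\cite{DBLP:journals/talg/DomLS14}, {\prob{RBDS}} parameterized by $\abs{B}+\kappa$ admits no polynomial kernel unless $\textsf{coNP}\subseteq\textsf{NP}/\textsf{poly}$, which in turn would collapse the polynomial hierarchy to its third level (${\textsf{PH}}=\Sigma_{\textsf{P}}^{3}$). The transfer tool I would use is standard in this framework: if~$P$ is {\nph}, $Q\in\np$, and there is a polynomial-time reduction from~$P$ to~$Q$ under which the parameter of the produced instance is bounded by a polynomial in the parameter of the input instance (a polynomial parameter transformation), then a polynomial kernel for~$Q$ would yield a polynomial kernel for~$P$; hence if~$P$ has no polynomial kernel under the stated assumption, neither does~$Q$. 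It therefore suffices to exhibit such a transformation from {\prob{RBDS}} parameterized by $\abs{B}+\kappa$ to {\prob{GCAI}} for~$f^{\text{CSR}}$ parameterized by $\abs{T}+k$.

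The transformation I would use is exactly the construction in the proof of Theorem~\ref{thm-gcai-csr-wbh}, which maps an {\prob{RBDS}} instance $(G,\kappa)$ to a {\prob{GCAI}} instance $(N,\varphi,S,T,k)$ with $S=T=B$ and $k=\kappa$. It runs in polynomial time, and its correctness---a {\yesins} maps to a {\yesins} and conversely---has already been established there. Because $S=T=B$ and $k=\kappa$, the parameter $\abs{T}+k$ of the produced instance equals $\abs{B}+\kappa$, hence is polynomially bounded in (indeed equal to) the parameter of the source instance. Finally, {\prob{GCAI}} for~$f^{\text{CSR}}$ lies in~\np---given a candidate set~$U$ one computes $f^{\text{CSR}}(\varphi,T\cup U)$ in polynomial time via its iterative definition and checks whether $S\subseteq f^{\text{CSR}}(\varphi,T\cup U)$---and {\prob{RBDS}} is {\nph}, so the transfer tool applies and delivers the corollary. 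The statement is not vacuous: since $\abs{S}\le\abs{T}$, Theorem~\ref{thm-csr-fpt} already shows {\prob{GCAI}} for~$f^{\text{CSR}}$ to be {\fpt} with respect to $\abs{T}+k$, so asking for a polynomial kernel in this parameter is meaningful.

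I do not expect a genuine obstacle: the mathematical substance is entirely contained in the already-proved Theorem~\ref{thm-gcai-csr-wbh} and in the cited incompressibility of {\prob{RBDS}}, and the rest is parameter bookkeeping. The one point worth a comment is that the remark preceding the corollary phrases the transformation in terms of $\abs{S}+k$ whereas the corollary is stated for $\abs{T}+k$; but in the reduction of Theorem~\ref{thm-gcai-csr-wbh} one has $S=T$, so these two quantities coincide and the argument is unaffected.
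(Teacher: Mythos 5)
Your proposal is correct and follows essentially the same route as the paper: the authors likewise observe that the reduction in Theorem~\ref{thm-gcai-csr-wbh} is a polynomial parameter transformation (with $S=T=B$ and $k=\kappa$, so $\abs{T}+k=\abs{B}+\kappa$) from {\prob{RBDS}}, and invoke the Dom--Lokshtanov--Saurabh incompressibility of {\prob{RBDS}} parameterized by $\abs{B}+\kappa$ together with the standard transfer argument. Your write-up in fact spells out the membership-in-{\np} and parameter-bookkeeping details more explicitly than the paper, which leaves them implicit.
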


Note that as $S\subseteq T$, Corollary~\ref{cor-1} also means that {\prob{GCAI}} for $f^{\memph{CSR}}$ is unlikely to admit any polynomial kernel with respect to~$\abs{S}+k$.

\begin{corollary}
{\prob{GCAI}} for $f^{\memph{LSR}}$ does not admit any polynomial kernel with respect to the parameter $\abs{S}+k$ unless the polynomial hierarchy collapses to the third level. Moreover, this holds even when restricted to~{\memph{QC}} profiles.
\end{corollary}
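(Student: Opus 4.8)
The plan is to derive this corollary directly from the reduction constructed in the proof of Theorem~\ref{thm-gcai-lsr}, combined with a known kernelization lower bound for {\prob{RBDS}}. The starting point is the incompressibility result of Dom, Lokshtanov, and Saurabh~\cite{DBLP:journals/talg/DomLS14}: {\prob{RBDS}} parameterized by the combined parameter $\abs{B}+\kappa$ (equivalently, {\prob{Set Cover}} parameterized by the universe size together with the solution size) admits no polynomial kernel unless the polynomial hierarchy collapses to the third level.

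Next I would observe, as already remarked above, that the reduction in the proof of Theorem~\ref{thm-gcai-lsr} is a polynomial parameter transformation from {\prob{RBDS}} parameterized by $\abs{B}+\kappa$ to {\prob{GCAI}} for $f^{\text{LSR}}$ parameterized by $\abs{S}+k$: it runs in polynomial time and preserves yes/no instances, and in the output instance $(N,\varphi,S,T,\kappa)$ one has $S=B$ and budget exactly~$\kappa$, so the target parameter $\abs{S}+k$ equals the source parameter $\abs{B}+\kappa$. Since {\prob{GCAI}} for $f^{\text{LSR}}$ is in {\np} (a set~$U$ of added individuals serves as a polynomial-size certificate) and {\prob{RBDS}} is {\nph}, the standard transfer argument then applies: a polynomial kernel for {\prob{GCAI}} for $f^{\text{LSR}}$ parameterized by $\abs{S}+k$ would, when composed with this transformation and with an {\np}-hardness reduction from {\prob{GCAI}} back to {\prob{RBDS}}, give a polynomial kernel for {\prob{RBDS}} parameterized by $\abs{B}+\kappa$, contradicting the cited result. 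This would establish the first assertion.

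For the statement about~QC profiles, I would only add that the profile~$\varphi$ constructed in the proof of Theorem~\ref{thm-gcai-lsr} is~QC, so the same polynomial parameter transformation already lands inside the~QC domain and the argument above carries over verbatim. I do not expect a genuine obstacle here, since all the needed ingredients are in place; the one point requiring care is to invoke the Dom--Lokshtanov--Saurabh bound for the correct parameterization of {\prob{RBDS}}, namely $\abs{B}+\kappa$ with~$B$ the side that must be dominated rather than the side from which vertices are selected, so that it lines up with $\abs{S}=\abs{B}$ and $k=\kappa$ in the constructed instance.
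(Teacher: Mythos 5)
Your proposal is correct and follows essentially the same route as the paper: the authors likewise observe that the reduction in the proof of Theorem~\ref{thm-gcai-lsr} is a polynomial parameter transformation from {\prob{RBDS}} parameterized by $\abs{B}+\kappa$ (with $S=B$ and $k=\kappa$, and the constructed profile already QC) and then invoke the Dom--Lokshtanov--Saurabh framework to transfer the kernelization lower bound. Your write-up merely makes explicit the standard transfer argument (membership of {\prob{GCAI}} in {\np} and {\nphns} of {\prob{RBDS}}) that the paper leaves implicit.
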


Additionally, note that {\prob{GCAI}} is {\fpt} with respect to $t=\abs{N\setminus T}$ because it can be solved in $\bigos{2^{t}}$ time by a brute-force search. Because {\prob{RBDS}} is unlikely to admit any polynomial kernel with respect to~$\abs{R}$~\cite{Cygan2015}, our reductions in Theorems~\ref{thm-gcai-csr-wbh} and \ref{thm-gcai-lsr} respectively lead to the following two corollaries.

\begin{corollary}
{\prob{GCAI}} for $f^{\memph{CSR}}$ does not admit any polynomial kernel with respect to the parameter $\abs{N\setminus T}$ unless the polynomial hierarchy collapses to the third level.
\end{corollary}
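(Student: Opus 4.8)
The plan is to obtain this corollary for free from the reduction already built in the proof of Theorem~\ref{thm-gcai-csr-wbh}, by checking that it behaves well with respect to the parameter $\abs{N\setminus T}$ and then invoking the known kernelization lower bound for {\prob{RBDS}} parameterized by $\abs{R}$. Recall that in that reduction an instance $(G,\kappa)$ of {\prob{RBDS}} with $G=(R\cup B,E)$ is mapped to the instance $(N,\varphi,S,T,\kappa)$ of {\prob{GCAI}} for $f^{\text{CSR}}$ with $N=R\cup B$ and $S=T=B$. The first step is the trivial but crucial observation that here $N\setminus T=R$, so in the produced instance the parameter $\abs{N\setminus T}$ equals $\abs{R}$, the natural parameter of the source {\prob{RBDS}} instance.

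Given this, the second step is to note that the reduction of Theorem~\ref{thm-gcai-csr-wbh} is in fact a polynomial parameter transformation from {\prob{RBDS}} parameterized by $\abs{R}$ to {\prob{GCAI}} for $f^{\text{CSR}}$ parameterized by $\abs{N\setminus T}$: it runs in polynomial time, it preserves the answer (both directions were already verified there), and the new parameter $\abs{N\setminus T}=\abs{R}$ is bounded by a polynomial—indeed by the identity—in the old parameter $\abs{R}$. The third step is to apply the standard transitivity of kernelization under polynomial parameter transformations. Since {\prob{RBDS}} is {\nph} and {\prob{GCAI}} for $f^{\text{CSR}}$ lies in {\np} (given a candidate set $U$, one checks $\abs{U}\leq k$ and $S\subseteq f^{\text{CSR}}(\varphi,T\cup U)$ in polynomial time, as $f^{\text{CSR}}$ is computed by a polynomial closure process), a hypothetical polynomial kernel for {\prob{GCAI}} for $f^{\text{CSR}}$ with respect to $\abs{N\setminus T}$ would, composed with the transformation above and a polynomial-time Karp reduction back from {\prob{GCAI}} to {\prob{RBDS}}, yield a polynomial compression of {\prob{RBDS}} parameterized by $\abs{R}$. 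It is known that {\prob{RBDS}} parameterized by $\abs{R}$ has no such polynomial compression unless $\np\subseteq\textnormal{coNP}/\textnormal{poly}$, equivalently unless the polynomial hierarchy collapses to its third level~\cite{Cygan2015,DBLP:journals/talg/DomLS14}. This contradiction gives the corollary.

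I do not expect a genuine obstacle here; the only point needing a little care is ensuring the transformation is parameter-preserving in the required direction, which is immediate because $\abs{N\setminus T}=\abs{R}$, and that the target problem is in {\np} so that the ``reduce back'' step of the transitivity argument is available, which is equally routine.
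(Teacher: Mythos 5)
Your proposal is correct and follows essentially the same route as the paper: the authors likewise observe that the reduction of Theorem~\ref{thm-gcai-csr-wbh} sends an {\prob{RBDS}} instance with red side~$R$ to a {\prob{GCAI}} instance with $N\setminus T=R$, and then invoke the known kernelization lower bound for {\prob{RBDS}} parameterized by~$\abs{R}$ via the polynomial parameter transformation framework of~\cite{DBLP:journals/talg/DomLS14}. Your write-up merely makes explicit the standard ``target in {\np}, reduce back'' step that the paper leaves implicit.
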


\begin{corollary}
{\prob{GCAI}} for $f^{\memph{LSR}}$ does not admit any polynomial kernel with respect to the parameter $\abs{N\setminus T}$ unless the polynomial hierarchy collapses to the third level. Moreover, this holds even when restricted to QC profiles.
\end{corollary}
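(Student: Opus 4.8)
The final statement is Corollary~5 (the kernelization lower bound for $f^{\text{LSR}}$ with respect to $\abs{N\setminus T}$, even on QC profiles). The strategy is to compose two facts: (i) {\prob{RBDS}} parameterized by $\abs{R}$ admits no polynomial kernel unless $\textsf{PH}=\Sigma_{\textsf{P}}^{3}$, a result already cited from~\cite{Cygan2015}; and (ii) the reduction constructed in the proof of Theorem~\ref{thm-gcai-lsr} is a \emph{polynomial parameter transformation} (PPT) from {\prob{RBDS}} parameterized by $\abs{R}$ to {\prob{GCAI}} for $f^{\text{LSR}}$ on QC profiles parameterized by $\abs{N\setminus T}$. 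Given a PPT between two parameterized problems both of which are {\np}-complete (or at least in {\np} with the source being {\np}-hard), a polynomial kernel for the target would yield one for the source; hence no polynomial kernel for the target unless the source has one. So the whole proof is: recall the PPT framework, verify the parameter bound, invoke the known {\prob{RBDS}} lower bound.

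\begin{proof}
We argue that the reduction in the proof of Theorem~\ref{thm-gcai-lsr} is a polynomial parameter transformation from {\prob{RBDS}} parameterized by~$\abs{R}$ to {\prob{GCAI}} for~$f^{\text{LSR}}$ restricted to~QC profiles parameterized by~$\abs{N\setminus T}$. Recall that in that reduction an instance $(G, \kappa)$ of {\prob{RBDS}} with $G=(B\cup R, E)$ is mapped in polynomial time to the instance $(N, \varphi, S, T, \kappa)$ with $T=B\cup C(R)$, where $C(R)=\bigcup_{r\in R}C(r)$ and $C(r)=\{r(i)\mid i\in[d(r)]\}$. The only individuals of~$N$ not in~$T$ are $\{r(0)\mid r\in R\}$, so $\abs{N\setminus T}=\abs{R}$; in particular the new parameter is bounded by a polynomial (indeed linearly) in the old parameter. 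Moreover the resulting profile is~QC, as shown in that proof, and the reduction is a correct many-one reduction. Hence it is a valid polynomial parameter transformation in the sense of~\cite{DBLP:journals/talg/DomLS14}.

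Both parameterized problems are in {\np} and, as established in Theorem~\ref{thm-gcai-lsr}, {\prob{GCAI}} for~$f^{\text{LSR}}$ on~QC profiles is {\np}-hard, as is {\prob{RBDS}}. Suppose, for contradiction, that {\prob{GCAI}} for~$f^{\text{LSR}}$ restricted to~QC profiles admits a polynomial kernel with respect to~$\abs{N\setminus T}$. Composing the above polynomial parameter transformation with this kernelization and with a polynomial-time many-one reduction back from {\prob{GCAI}} for~$f^{\text{LSR}}$ on~QC profiles to {\prob{RBDS}} (which exists since both are {\np}-complete) yields a polynomial kernel for {\prob{RBDS}} parameterized by~$\abs{R}$. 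However, {\prob{RBDS}} parameterized by~$\abs{R}$ has no polynomial kernel unless $\textsf{PH}=\Sigma_{\textsf{P}}^{3}$~\cite{Cygan2015}. This contradiction proves the first claim; since the whole argument takes place within the class of~QC instances, the ``moreover'' part follows as well.
\end{proof}

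The only genuinely delicate point is bookkeeping: one must make sure that the parameter of the \emph{target} instance is polynomially bounded in the parameter of the \emph{source} instance — here $\abs{N\setminus T}=\abs{R}$ exactly, which is immediate from the construction — and that both problems are {\np}-complete so that the standard transitivity argument for no-poly-kernel results (composition of PPT with kernelization with a classical reduction back) applies cleanly. No new combinatorial work is needed; the theorem is a direct corollary of Theorem~\ref{thm-gcai-lsr} together with the cited lower bound for {\prob{RBDS}}.
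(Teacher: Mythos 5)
Your proposal is correct and follows essentially the same route as the paper: the authors likewise observe that the reduction in Theorem~\ref{thm-gcai-lsr} is a polynomial parameter transformation with $\abs{N\setminus T}=\abs{\{r(0)\mid r\in R\}}=\abs{R}$ and invoke the known kernelization lower bound for {\prob{RBDS}} parameterized by $\abs{R}$ via the framework of Dom, Lokshtanov, and Saurabh. You merely spell out the standard composition argument (PPT, kernel, reduction back using {\np}-completeness) that the paper leaves implicit.
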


Finally, observe that {\prob{GCAI}} can be also solved in $\bigos{t^k}$ time by a brute-force search. As {\prob{RBDS}} cannot be solved in $\bigos{2^{\smallo{\abs{R}}}}$ time assuming the Strong Exponential Time Hypothesis (SETH)~\cite{Cygan2015}, and it cannot be solved in $\bigos{\abs{R}^{\kappa}}$ time either assuming ETH~\cite{DBLP:journals/jcss/ChenHKX06}, our reductions in the proofs of Theorems~\ref{thm-gcai-csr-wbh} and~\ref{thm-gcai-lsr} imply that these brute-force based algorithms are essentially optimal.

\begin{corollary}
Unless SETH fails {\prob{GCAI}} for $f^{\memph{CSR}}$ cannot be solved in $\bigos{2^{\smallo{t}}}$ time, and unless ETH fails {\prob{GCAI}} for $f^{\memph{CSR}}$ cannot be solved in~$\bigos{t^{\smallo{k}}}$ time.
\end{corollary}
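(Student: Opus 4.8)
The plan is to derive both lower bounds from the hardness of \prob{RBDS} via the reduction constructed in the proof of Theorem~\ref{thm-gcai-csr-wbh}, checking that this reduction is ``tight'' in the relevant parameters. Recall that in that reduction, an \prob{RBDS} instance $(G,\kappa)$ with $G=(R\cup B,E)$ is mapped to a \prob{GCAI} instance $(N,\varphi,S,T,\kappa)$ with $N=R\cup B$, $S=T=B$, and the solution budget unchanged. Hence $t=\abs{N\setminus T}=\abs{R}$ and the budget $k=\kappa$ are preserved exactly (up to additive constants, which is all that matters here). So any algorithm for \prob{GCAI} for $f^{\text{CSR}}$ running in $\bigos{2^{o(t)}}$ time would, through this reduction, solve \prob{RBDS} in $\bigos{2^{o(\abs{R})}}$ time, contradicting SETH; and any algorithm running in $\bigos{t^{o(k)}}$ time would solve \prob{RBDS} in $\bigos{\abs{R}^{o(\kappa)}}$ time, contradicting ETH (via~\cite{DBLP:journals/jcss/ChenHKX06}). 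The first step is therefore simply to state these two reductions-plus-contradiction arguments, invoking the cited conditional lower bounds for \prob{RBDS}.

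The second step is to observe the matching upper bounds already noted in the concluding remarks: \prob{GCAI} is solvable in $\bigos{2^{t}}$ time by brute force over all subsets of $N\setminus T$, and in $\bigos{t^{k}}$ time by brute force over all $\leq k$-element subsets of $N\setminus T$. Together with the lower bounds from the first step, this shows the two brute-force algorithms are essentially optimal under SETH and ETH respectively, which is exactly the content of the corollary.

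The only point requiring a little care — and the main (mild) obstacle — is to confirm that the reduction of Theorem~\ref{thm-gcai-csr-wbh} is genuinely linear in both parameters simultaneously, so that a subexponential-in-$t$ (resp.\ $t^{o(k)}$) algorithm really does transfer to the \prob{RBDS} side without blowing up the parameter. Since the reduction sets $N\setminus T=R$ verbatim and copies the budget $\kappa$ without change, we have $t=\abs{R}$ and $k=\kappa$ on the nose, so no rescaling argument is needed; the transfer is immediate. One should also note that the \prob{RBDS} lower bounds are usually stated in terms of $\abs{R}+\abs{B}$ or the total instance size rather than $\abs{R}$ alone, but since $\abs{B}\leq 2^{\abs{R}}$ may be assumed without loss of generality (vertices of $B$ with identical neighbourhoods can be merged), the bounds in the form ``$\bigos{2^{o(\abs{R})}}$'' and ``$\bigos{\abs{R}^{o(\kappa)}}$'' hold, as used in~\cite{Cygan2015,DBLP:journals/jcss/ChenHKX06}. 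With these observations in place the corollary follows directly.
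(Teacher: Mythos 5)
Your proposal is correct and matches the paper's own (implicit) argument: the reduction in Theorem~\ref{thm-gcai-csr-wbh} sets $N\setminus T=R$ and keeps the budget $\kappa$ unchanged, so the known SETH/ETH lower bounds for \prob{RBDS} in terms of $\abs{R}$ and $\kappa$ transfer verbatim, and the brute-force upper bounds show optimality. Your extra remark about merging duplicate blue vertices is harmless but not needed, since the $\bigos{\cdot}$ notation already absorbs polynomial factors in the total instance size.
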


\begin{corollary}
\label{cor-last}
Unless SETH fails {\prob{GCAI}} for $f^{\memph{LSR}}$ cannot be solved in $\bigos{2^{\smallo{t}}}$ time, and unless ETH fails {\prob{GCAI}} for $f^{\memph{LSR}}$ cannot be solved in~$\bigos{t^{\smallo{k}}}$ time. Moreover, this holds even when restricted to QC profiles.
\end{corollary}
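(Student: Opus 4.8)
The plan is to observe that Corollary~\ref{cor-last} is an immediate consequence of the reduction already constructed in the proof of Theorem~\ref{thm-gcai-lsr}, together with the known conditional lower bounds for {\prob{RBDS}}. Recall that in that reduction, an {\prob{RBDS}} instance $(G, \kappa)$ with $G = (B \cup R, E)$ is mapped to a {\prob{GCAI}} for $f^{\text{LSR}}$ instance $(N, \varphi, S, T, \kappa)$ in which $N \setminus T = \{r(0) \mid r \in R\}$, so that $t := \abs{N \setminus T} = \abs{R}$, the solution-size parameter is preserved exactly ($k = \kappa$), and the constructed profile is~QC. Hence the reduction is a polynomial-time many-one reduction that is, simultaneously, linear in the parameter $t$ (in fact $t = \abs{R}$) and identity-preserving in $k$.

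The key steps are then as follows. First, I would invoke the fact, noted in the excerpt, that {\prob{GCAI}} admits a brute-force algorithm running in $\bigos{2^{t}}$ time (try all subsets of $N \setminus T$) and one running in $\bigos{t^{k}}$ time (try all $k$-subsets of $N \setminus T$); this fixes the two upper bounds whose optimality we wish to certify. Second, suppose for contradiction that {\prob{GCAI}} for $f^{\text{LSR}}$ restricted to~QC profiles could be solved in $\bigos{2^{\smallo{t}}}$ time. Composing the polynomial-time reduction of Theorem~\ref{thm-gcai-lsr} (which outputs a~QC instance with $t = \abs{R}$) with such an algorithm would solve {\prob{RBDS}} in $\bigos{2^{\smallo{\abs{R}}}} \cdot \abs{R}^{\bigo{1}} = \bigos{2^{\smallo{\abs{R}}}}$ time, contradicting the SETH-based lower bound for {\prob{RBDS}} cited from~\cite{Cygan2015}. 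Third, suppose instead that the problem could be solved in $\bigos{t^{\smallo{k}}}$ time; since the reduction preserves $k = \kappa$ and has $t = \abs{R}$, this would yield an algorithm for {\prob{RBDS}} running in $\bigos{\abs{R}^{\smallo{\kappa}}}$ time, contradicting the ETH-based lower bound of~\cite{DBLP:journals/jcss/ChenHKX06}. The ``moreover'' clause is free: the entire argument already takes place within the~QC domain, because Theorem~\ref{thm-gcai-lsr} produces~QC instances.

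The argument is essentially bookkeeping, so there is no real obstacle; the only point that requires a moment's care is verifying that the reduction's blow-up in the relevant parameters is of the right order. Specifically, one must check that $t = \abs{N \setminus T} = \abs{R}$ exactly (not merely polynomially bounded), so that a subexponential-in-$t$ or $t^{\smallo{k}}$ algorithm really does translate into a subexponential-in-$\abs{R}$ or $\abs{R}^{\smallo{\kappa}}$ algorithm for {\prob{RBDS}}; this is indeed the case since the only individuals outside $T$ are the $\abs{R}$ vertices $\{r(0) \mid r \in R\}$. One should also note that the additive polynomial overhead of the reduction is absorbed into the $\bigos{\cdot}$ and $\smallo{\cdot}$ notation without affecting the conclusions. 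With these checks in place, the corollary follows, and its proof can be stated in a few lines.
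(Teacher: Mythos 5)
Your proposal is correct and follows essentially the same route as the paper, which derives this corollary directly from the reduction in Theorem~\ref{thm-gcai-lsr} (noting that it outputs a QC instance with $t=\abs{N\setminus T}=\abs{R}$ and $k=\kappa$) combined with the SETH- and ETH-based lower bounds for {\prob{RBDS}}. Your explicit check that $t=\abs{R}$ exactly, rather than merely polynomially bounded, is precisely the bookkeeping the paper leaves implicit.
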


\section*{Appendix}

\begin{proof}[Proof of Theorem~\ref{thm-dvwst-fpt}]
Let  $I=(G, X, u, w, p)$ be an instance of {\prob{DVWST}}. We create an instance of {\prob{DST}} equivalent to~$I$ as follows.

We first create an arc-weighted digraph~$G'$ obtained from~$G$ by performing the following operations:
\begin{enumerate}
    \item[(1)]  Replace every vertex $v\in V(G)\setminus (X\cup \{u\})$ with two vertices~$v^{\text{in}}$ and~$v^{\text{out}}$, add an arc from $v^{\text{in}}$ to~$v^{\text{out}}$ with weight~$w(v)$, and add some arcs so that the inneighbors of $v^{\text{in}}$ are exactly the inneighbors of $v$ in~$G$, and the outneighbors of $v^{\text{out}}$ are exactly the outneighbors of~$v$ in~$G$.
   \item[(2)] Set the weight of all the arcs whose weights are not yet specified to be~$0$.
\end{enumerate}
Let $w': E(G')\rightarrow \mathbb{N}\cup \{0\}$ be the function corresponding to the weights specified for the arcs in~$G'$ above. The instance of {\prob{DST}} is $(G', X, u, w', p)$. The reduction clearly can be done in polynomial time. It remains to show the correctness.

$(\Rightarrow)$ Assume that the {\prob{DVWST}} instance is a {\yesins}, i.e., there is a subset $J\subseteq V(G)\setminus (X\cup\{u\})$ such that $\sum_{v\in J}w(J)\leq p$, and for every terminal $x\in X$ there is a directed path from~$u$ to~$x$ in the subgraph of~$G$ induced by~$J\cup X\cup \{u\}$. Let~$J'=E(G)\cup \{\arc{v^{\text{in}}}{v^{\text{out}}} \setmid v\in J\}$. Due to the above construction every original arc in~$G$ has weight~$0$ under~$w'$, and every arc $\arc{v^{\text{in}}}{v^{\text{out}}}$ has weight~$w(v)$ under~$w'$. It follows that
\[\sum_{ \arc{v^{\text{in}}}{v^{\text{out}}} \in J'} w'(\arc{v^{\text{in}}}{v^{\text{out}}})=\sum_{v\in J}w(v)\leq p.\]
Moreover, if $u\ v_1\ v_2\ \cdots\ v_t\ x$ is a directed path from the root~$u$ to some terminal $x\in X$ in the digraph~$G$, by the definition of~$G'$ we know that $u\ v_1^{\text{in}}\ v_1^{\text{out}}\ v_2^{\text{in}}\ v_2^{\text{out}}\ \cdots\  v_t^{\text{in}}\ v_t^{\text{out}}\ x$ is a directed path in~$G'$. Therefore, the constructed {\prob{DST}} instance is a {\yesins}.

$(\Leftarrow)$ Assume that the constructed instance of {\prob{DST}} is a {\yesins}, i.e., there is a subset~$J$ of arcs in~$G'$ so that $\sum_{e\in J}w'(e)\leq p$ and, moreover, for every terminal $x\in X$ there is a directed path from~$u$ to~$x$ in the subgraph of~$G'$ induced by~$V(J)$. Let
\[J'=\{v\in V(G)\setminus (X\cup \{u\}) \setmid \arc{v^{\text{in}}}{v^{\text{out}}}\in J\}.\]
Similar to the above analysis, we know that $\sum_{v\in J'}w(v)\leq p$ and for every terminal $x\in X$ we can change any directed path from~$u$ to~$x$ in the subgraph of~$G'$ induced by~$V(J)$ into a directed path from~$u$ to~$x$ in the subgraph of~$G$ induced by~$J'\cup X\cup \{u\}$. In particular, observe that each~$v^{\text{in}}$ has a unique outneighbor~$v^{\text{out}}$. So, in any~$u$-$x$ directed path containing a vertex $v^{\text{in}}$, the next vertex after~$v^{\text{in}}$ must be $v^{\text{out}}$. Therefore, from a directed path from~$u$ to~$x$ in~$G'$, we can obtain a directed path from~$u$ to~$x$ in~$G$ by replacing every arc~$\arc{v^{\text{in}}}{v^{\text{out}}}$ in the path with the vertex~$v$.

The theorem follows from the above reduction and the fact that {\prob{DST}} can be solved in~$\bigos{2^{\ell}}$ time, where~$\ell$ is the number of terminals~\cite{DBLP:journals/networks/DreyfusW71,DBLP:journals/siamdm/GuoNS11,DBLP:conf/stoc/BjorklundHKK07,DBLP:journals/mst/FuchsKMRRW07}.
\end{proof}

\end{document}